\crefname{figure}{Figure}{Figures} 
\crefname{equation}{}{} 
\crefname{assumption}{Assumption}{Assumptions}
\crefname{subsection}{Subsection}{Subsections}
\newcounter{cdrow}
\newtheorem{theorem}{Theorem}[]
\newtheorem*{theorem*}{Theorem}
\newtheorem{lemma}[theorem]{Lemma}
\newtheorem*{claim*}{Claim}
\theoremstyle{definition}
\newtheorem{definition}[theorem]{Definition}
\newtheorem*{definition*}{Definition}
\theoremstyle{remark}
\newtheorem*{example*}{Example}
\newcommand*\diff{\mathop{}\!\mathrm{d}}
\newcommand*{\op}{%
  \DOTSB
  \mathop{\vphantom{\bigoplus}\mathpalette\matt@op\relax}%
  \slimits@
}
\newcommand\matt@op[2]{%
  \vcenter{\m@th\hbox{\resizebox{\widthof{$#1\bigoplus$}}{!}{$\boxplus$}}}%
}
\newcommand{\argmin}{\text{argmin}}
\def\@biblabel#1{}
\@citea\NAT@hyper@{%
     \NAT@nmfmt{\NAT@nm}%
     \hyper@natlinkbreak{\NAT@aysep\NAT@spacechar}{\@citeb\@extra@b@citeb}%
     \NAT@date}}
\@citea\NAT@nmfmt{\NAT@nm}%
\NAT@spacechar\NAT@hyper@{\NAT@date}}{}{}
\@citea\NAT@hyper@{%
     \NAT@nmfmt{\NAT@nm}%
     \hyper@natlinkbreak{\NAT@spacechar\NAT@@open\if*#1*\else#1\NAT@spacechar\fi}%
       {\@citeb\@extra@b@citeb}%
     \NAT@date}}
\@citea\NAT@nmfmt{\NAT@nm}%
\fi\NAT@hyper@{\NAT@date}}
\begin{document}
\def\spacingset#1{\renewcommand{\baselinestretch}%
{#1}\small\normalsize} \spacingset{1}

\begin{flushleft}
{\Large{\textbf{$k$-Means Clustering for Persistent Homology}}}
\newline
\\
Yueqi Cao$^{1,\dagger}$, Prudence Leung$^{1,\dagger}$, and Anthea Monod$^{1,\dagger}$
\\
\bigskip
\bf{1} Department of Mathematics, Imperial College London, UK
\\
\bigskip
$\dagger$ Corresponding e-mail: y.cao21@imperial.ac.uk
\end{flushleft}


\section*{Abstract}
Persistent homology is a methodology central to topological data analysis that extracts and summarizes the topological features within a dataset as a persistence diagram; it has recently gained much popularity from its myriad successful applications to many domains. However, its algebraic construction induces a metric space of persistence diagrams with a highly complex geometry.  In this paper, we prove convergence of the $k$-means clustering algorithm on persistence diagram space and establish theoretical properties of the solution to the optimization problem in the Karush--Kuhn--Tucker framework.  Additionally, we perform numerical experiments on various representations of persistent homology, including embeddings of persistence diagrams as well as diagrams themselves and their generalizations as persistence measures; we find that $k$-means clustering performance directly on persistence diagrams and measures outperform their vectorized representations.

\paragraph{Keywords:} Alexandrov geometry; Karush--Kuhn--Tucker optimization; $k$-means clustering; persistence diagrams; persistent homology.

\section{Introduction}
\label{sec:intro}

Topological data analysis (TDA) is a recently-emerged field of data science that lies at the intersection of pure mathematics and statistics. The core approach of the field is to adapt classical algebraic topology to the computational setting for data analysis.  The roots of this approach first appeared in the early 1990s by \cite{frosini1992measuring} for image analysis and pattern recognition in computer vision \citep{Verri1993}; soon afterwards, these earlier ideas by \cite{frosini_size_2001} were reinterpreted, developed, and extended concurrently by \cite{edelsbrunner_topological_2002} and \cite{zomorodian_computing_2005}. These developments laid the foundations for a comprehensive set of computational topological tools which have then enjoyed great success in applications to a wide range of data settings in recent decades.  Some examples include sensor network coverage problems \citep{de_silva_coverage_2007}; biological and biomedical applications, including understanding the structure of protein data \citep{gameiro_topological_2014,kovacev-nikolic_using_2016,xia_multiscale_2016} and the 3D structure of DNA \citep{emmett_multiscale_2015}, and predicting disease outcome in cancer \citep{crawford2016functional}; robotics and the problem of goal-directed path planning \citep{bhattacharya_persistent_2015}, as well as other general planning and motion problems, such as goal-directed path planning  \citep{doi:10.1177/0278364915586713} and determining the cost of bipedal walking \citep{VASUDEVAN2013101}; materials science to understand the structure of amorphous solids \citep{hiraoka2016hierarchical}; characterizing chemical reactions and chemical structure \citep{doi:10.1021/acs.jctc.2c01204}; financial market prediction \citep{ismail_detecting_2020}; quantifying the structure of music \citep{bergomi2020homological}; among many others \citep[e.g.,][]{otter_roadmap_2017}.

\emph{Persistent homology} is the most popular methodology from TDA due to its intuitive construction and interpretability in applications; it extracts the topological features of data and computes a robust statistical representation of the data under study.  A particular advantage of persistent homology is that it is applicable to very general data structures, such as point clouds, functions, and networks.  Notice that among these examples, the data may be endowed with only a notion of relatedness (point clouds) or may not even be a metric space at all (networks).

A major limitation of persistent homology, however, is that its output takes a somewhat unusual form of a collection of intervals; more specifically, it is a multiset of half-open intervals together with the set of zero-length intervals with infinite multiplicity, known as a \emph{persistence diagram}.  Statistical methods for interval-valued data \citep[e.g.,][]{billard2000regression} cannot be used to analyze persistence diagrams because their algebraic topological construction gives rise to a complex geometry,
which hinders their direct applicability to classical statistical and machine learning methods.  The development of techniques to use persistent homology in statistics and machine learning has been an active area of research in the TDA community.  Largely, the techniques entail either modifying the structure of persistence diagrams to take the form of vectors (referred to as \emph{embedding} or \emph{vectorization}) so that existing statistical and machine learning methodology may be applied directly \citep[e.g.,][]{JMLR:v16:bubenik15a,JMLR:v18:16-337}; or developing new statistical and machine learning methods to accommodate the algebraic structure of persistence diagrams \citep[e.g.,][]{reininghaus_stable_2015}.  The former approach is more widely used in applications, while the latter approach tends to be more technically challenging and requires taking into account the intricate geometry of persistence diagrams and the space of all persistence diagrams.  Our work studies both approaches experimentally and contributes a new theoretical result in the latter setting.  

In machine learning, a fundamental task is clustering, which groups observations of a similar nature to determine structure within the data.  The $k$-means clustering algorithm is one of the most well-known clustering methods. For structured data in linear spaces, the $k$-means algorithm is widely used, with well-established theory on stability and convergence. However, for unstructured data in nonlinear spaces, such as persistence diagrams, it is a nontrivial task to apply the $k$-means framework. The key challenges are twofold: (i) how to define the mean of data in a nonlinear space, and (ii) how to compute the distance between the data points. These challenges are especially significant in the task of obtaining convergence guarantees for the algorithm. For persistence diagrams specifically, the cost function is not differentiable nor subdifferentiable in the traditional sense because of their complicated metric geometry.


In this paper, we study the $k$-means algorithm on the space of persistence diagrams both theoretically and experimentally.  Theoretically, we establish convergence of the $k$-means algorithm.  In particular, we provide a characterization of the solution to the optimization problem in the Karush--Kuhn--Tucker (KKT) framework and show that the solution is a partial optimal point, KKT point, as well as a local minimum.  Experimentally, we study various representations of persistent homology in the $k$-means algorithm, as persistence diagram embeddings, on persistence diagrams themselves, as well as on their generalizations as persistence measures.  To implement $k$-means clustering on persistence diagram and persistence measure space, we reinterpret the algorithm taking into account the appropriate metrics on these spaces as well as constructions of their respective means.  Broadly, our work contributes to the growing body of increasingly important and relevant \emph{non-Euclidean statistics} \citep[e.g.,][]{blanchard2022fr} that deals with, for example, spaces of matrices under nonlinear constraints \citep{10.1214/09-AOAS249}; quotients of Euclidean spaces defined by group actions \citep{le_kume_2000}; and Riemannian manifolds \citep{10.5555/3455716.3455939}.

The remainder of this paper is organized as follows.  We end this introductory section with an overview of existing literature relevant to our study.  Section \ref{sec:preliminaries} provides the background on both persistent homology and $k$-means clustering.  We give details on the various components of persistent homology needed to implement the $k$-means clustering algorithm.  In Section \ref{sec:conv}, we provide main result on the convergence of the $k$-means algorithm in persistence diagram space in the KKT framework.  Section \ref{sec:exp} provides results to numerical experiments comparing the performance of $k$-means clustering on embedded persistence diagrams versus persistence diagrams and persistence measures.  We close with a discussion on our contributions and ideas for future research in Section \ref{sec:end}.

\paragraph{Related Work}  A previous study of the $k$-means clustering in persistence diagram space compares the performance to other classification algorithms; it also establishes local minimality of the solution to the optimization problem in the convergence of the algorithm \citep{maroulas_kmeans_2017}. 
Our convergence study expands on this previous study to the more general KKT framework, which provides a more detailed convergence analysis, and differs experimentally by studying the performance of $k$-means clustering specifically in the context of persistent homology on simulated data as well as a benchmark dataset.

Clustering and TDA more generally has also been recently overviewed, explaining how topological concepts may apply to the problem of clustering \citep{panagopoulos2022topological}.
Other work studies topological data analytic clustering for time series data as well as space-time data \citep{islambekov_unsupervised_2019,MAJUMDAR2020113868}.  


\section{Background and Preliminaries}
\label{sec:preliminaries}

In this section, we provide details on persistent homology as the context in which our study takes place; we also outline the $k$-means clustering algorithm.  We then provide more technical details on the specifics of persistent homology that are needed to implement the $k$-means clustering algorithm, as well as of Karush--Kuhn--Tucker optimization.

\subsection{Persistent Homology}
\label{sec:ph}

Algebraic topology is a field in pure mathematics that uses abstract algebra to study general topological spaces.  A primary goal of the field is to characterize topological spaces and classify them according to properties that do not change when the space is subjected to smooth deformations, such as stretching, compressing, rotating, and reflecting; these properties are referred to as \emph{invariants}.  Invariants and their associated theory can be seen as relevant to complex and general real-world data settings in the sense that they provide a way to quantify and rigorously describe the ``shape'' and ``size'' of data, which can then be considered as summaries for complex, non-Euclidean data structures and settings are becoming increasingly important and challenging to study.  A particular advantage of TDA, given its algebraic topological roots, is its flexibility and applicability both when there is a clear formalism to the data (e.g., rigorous metric space structure) and when there is a less obvious structure (e.g., networks and text).

Persistent homology adapts the theory of homology from classical algebraic topology to data settings in a dynamic manner.  Homology algebraically identifies and counts invariants as a way to distinguish topological spaces from one another; various different homology theories exist depending on the underlying topological space of study.  When considering the homology of a dataset as a finite topological space $X$, it is common to use \emph{simplicial homology} over a field.  A \emph{simplicial complex} is a reduction of a general topological space to a discrete version, as a union of simpler components, glued together in a combinatorial fashion.  An advantage of considering simplicial complexes and studying simplicial homology is that there are efficient algorithms to compute homology.  More formally, these basic components are \emph{$k$-simplices}, where each $k$-simplex is the convex hull of $k+1$ affinely independent points $x_0, x_1, \ldots, x_k$, denoted by $[x_0,\, x_1,\, \ldots,\, x_k]$.  A set constructed combinatorially of $k$-simplices is a simplicial complex, $K$.

Persistent homology is a dynamic version of homology, which is built on studying homology over a filtration.  A \emph{filtration} is a nested sequence of topological spaces, $X_0 \subseteq X_1 \subseteq \cdots \subseteq X_n = X$; there exist different kinds of filtrations, which are defined by their nesting rule.  In this paper, we study \emph{Vietoris--Rips} (VR) filtrations for finite metric spaces $(X, d_X)$, which are commonly used in applications and real data settings in TDA.  Let $\epsilon_1 \leq \epsilon_2 \leq \cdots \leq \epsilon_n$ be an increasing sequence of parameters where each $\epsilon_i \in \mathbb{R}_{\geq 0}$.  The \emph{Vietoris--Rips complex} of $X$ at scale $\epsilon_i$ $\mathrm{VR}(X, \epsilon_i)$ is constructed by adding a node for each $x_j \in X$ and a $k$-simplex for each set $\{x_{j_1}, x_{j_2}, \ldots, x_{j_{k+1}}\}$ with diameter $d(x_i, x_j)$ smaller than $\epsilon_i$ for all $0 \leq i,j \leq k$.  We then obtain a Vietoris--Rips filtration
$$
    \mathrm{VR}(X,\epsilon_1)\hookrightarrow\mathrm{VR}(X,\epsilon_2)\hookrightarrow\cdots\hookrightarrow\mathrm{VR}(X,\epsilon_n),
$$   
which is a sequence of inclusions of sets.

\begin{figure}
\centering
\includegraphics[scale=0.5]{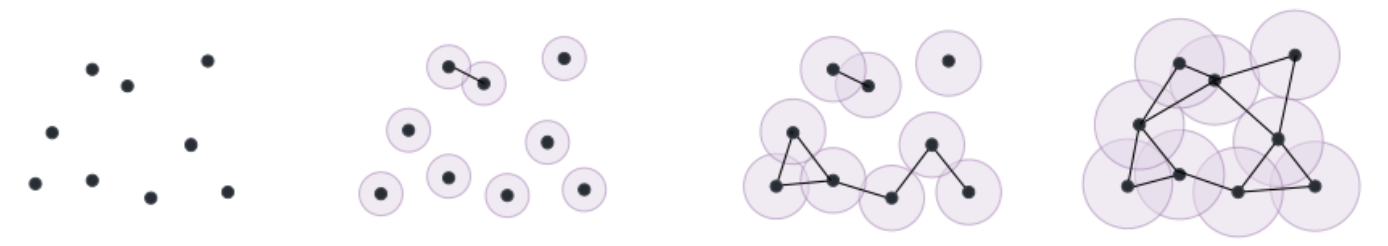}
\caption{Example of a VR filtration and VR complexes at various increasing scales on a point cloud.}
\end{figure}

The sequence of inclusions given by the filtration induces maps in homology for any fixed dimension $\bullet$. Let $H_\bullet(X,\epsilon_i)$ be the homology group of $\mathrm{VR}(X,\epsilon_i)$ with coefficients in a field. This gives rise to the following sequence of vector spaces:
$$
    H_\bullet(X,\epsilon_1)\to H_\bullet(X,\epsilon_2)\to\cdots\to H_\bullet(X,\epsilon_n).
$$
The collection of vector spaces $H_\bullet(X,\epsilon_i)$, together with their maps (vector space homomorphisms), $H_\bullet(X,\epsilon_i)\to H_\bullet(X,\epsilon_j)$, is called a \emph{persistence module}. 

The algebraic interpretation of a persistence module and how it corresponds to topological (more specifically, homological) information of data is formally given as follows.  For each finite dimensional $H_\bullet(X,\epsilon_i)$, the persistence module can be decomposed into rank one summands corresponding to birth and death times of homology classes \citep{chazal2016structure}: Let $\alpha\in H_\bullet(X,\epsilon_i)$ be a nontrivial homology class.  $\alpha$ is \emph{born} at $\epsilon_i$ if it is not in the image of $H_\bullet(X,\epsilon_{i-1})\to H_\bullet(X,\epsilon_i)$ and similarly, it \emph{dies} entering $\epsilon_j$ if the image of $\alpha$ via $H_\bullet(X,\epsilon_i)\to H_\bullet(X,\epsilon_{j-1})$ is not in the image $H_\bullet(X,\epsilon_{i-1})\to H_\bullet(X,\epsilon_{j-1})$, but the image of $\alpha$ via $H_\bullet(X,\epsilon_i)\to H_\bullet(X,\epsilon_{j})$ is in the image $H_\bullet(X,\epsilon_{i-1})\to H_\bullet(X,\epsilon_{j})$. The collection of these birth--death intervals $[\epsilon_i,\epsilon_j)$ 
represents the \emph{persistent homology} of the Vietoris--Rips filtration of $X$. Taking each interval as an ordered pair of birth--death coordinates and plotting each in a plane $\mathbb{R}^2$ yields a persistence diagram; see Figure \ref{fig:pd}.  This algebraic structure and its construction generate the complex geometry of the space of all persistence diagrams, which must be taken into account when performing statistical analyses on persistence diagrams representing the topological information of data.

More intuitively, homology studies the holes of a topological space as a kind of invariant, since the occurrence of holes and how many there are do not change under continuous deformations of the space.  Dimension 0 homology corresponds to connected components; dimension 1 homology corresponds to loops; and dimension 2 homology corresponds to voids or bubbles.  These ideas can be generalized to higher dimensions and the holes are topological features of the space (or dataset).  Persistent homology tracks the appearance of these features with an interval, where the left endpoint of the interval signifies the first appearance of a feature within the filtration and the right endpoint signifies its disappearance.  The length of the interval can therefore be thought of as the lifetime of the topological feature.  The collection of the intervals is known as a \emph{barcode} and summarizes the topological information of the dataset (according to dimension).  A barcode can be equivalently represented as a persistence diagram, which is a scatterplot of the birth and death points of each interval. See Figure \ref{fig:illustration} for a visual illustration.

\begin{figure}
\begin{subfigure}{0.45\textwidth}
\centering
\includegraphics[width=\textwidth]{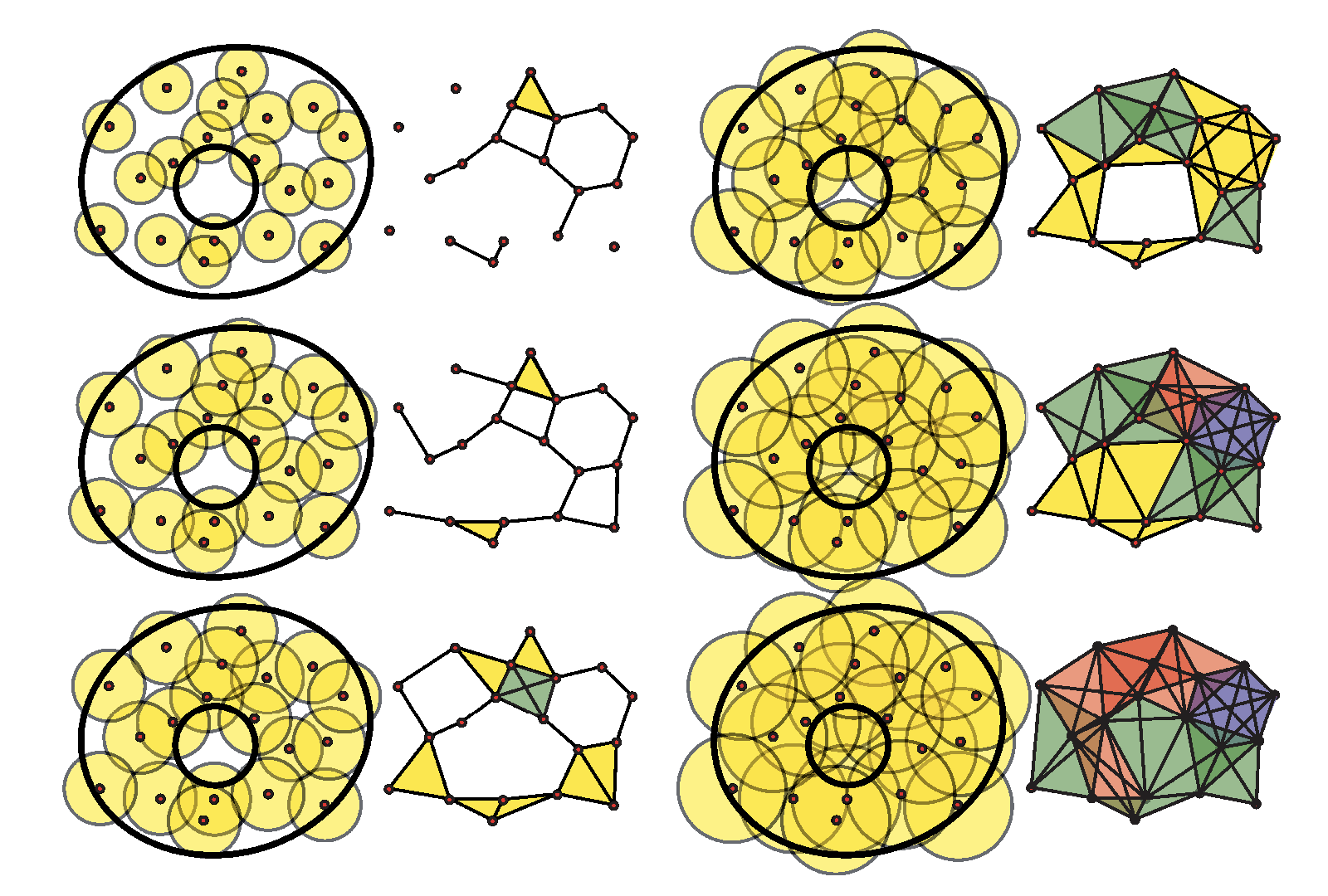}
\end{subfigure}
\begin{subfigure}{0.45\textwidth}
\centering
\includegraphics[width=\textwidth]{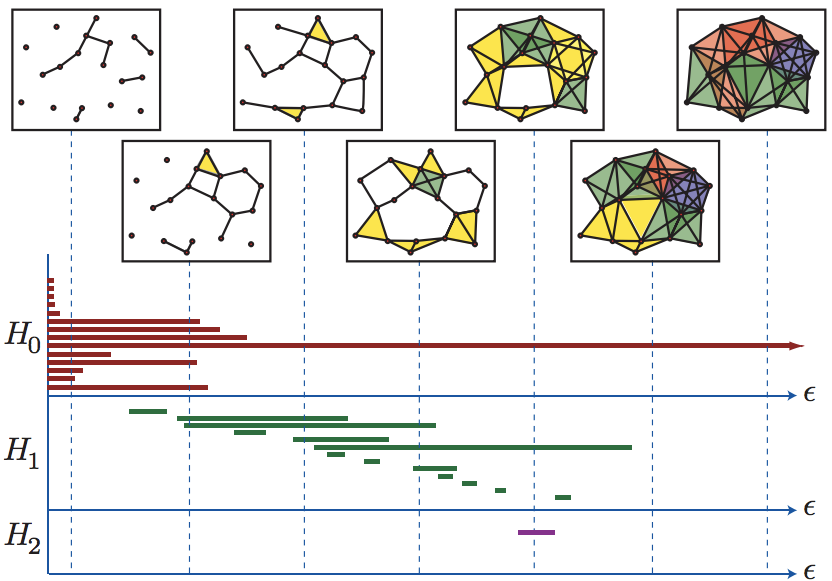}
\end{subfigure}
\caption{Example illustrating the procedure of persistent homology \citep{ghrist2008barcodes}.  The panel on the left shows an annulus, which is the underlying space of interest; the topology of an annulus is characterized by 1 connected component (corresponding to $H_0$), 1 loop (corresponding to $H_1$), and 0 voids (corresponding to $H_2$).  17 points are sampled from the annulus and a simplicial complex (in order to be able to use simplicial homology) is constructed from these samples in the following way: The sampled points are taken to be centers of balls with radius $\epsilon$, which grows from 0 to $\infty$.  At each value of $\epsilon$, whenever two balls intersect, they are connected by an edge; when three balls mutually intersect, they are connected by a triangle (face); this construction continues for higher dimensions.  The simplicial complex resulting from this overlapping of balls as the radius $\epsilon$ grows is shown to the right of each snapshot at different values of $\epsilon$.  As $\epsilon$ grows, connected components merge, cycles are formed, and fill up.  These events are tracked by a bar for each dimension of homology, shown on the right; the lengths of the bars corresponds to the lifetime of each topological feature as $\epsilon$ grows.  Notice that as $\epsilon$ reaches infinity, there remains a single connected component tracked by the longest bar for $H_0$.  For $H_1$, there are bars of varying length, including several longer bars, which suggests irregular sampling of the annulus; the longest bar corresponds to the single loop of the annulus.  Notice also that there is one short $H_2$ bar, which is likely to be a spurious topological feature corresponding to noisy sampling.}
\label{fig:illustration}
\end{figure}

\begin{definition}
A {\em persistence diagram} $D$ is a locally finite multiset of points in the half-plane $\Omega = \{(x,y)\in\mathbb{R}^2 \mid x<y\}$ together with points on the diagonal $\partial\Omega=\{(x,x)\in\mathbb{R}^2\}$ counted with infinite multiplicity. Points in $\Omega$ are called {\em off-diagonal points}. The persistence diagram with no off-diagonal points is called the {\em empty persistence diagram}, denoted by $D_\emptyset$. 
\end{definition}

Intuitively, points that are far away from the diagonal are the long intervals corresponding to topological features that have a long lifetime in the filtration.  These points are said to have \emph{high persistence} and are generally considered to be topological signal, while points closer to the diagonal are considered to be topological noise.

\paragraph{Persistence Measures}

A more general representation of persistence diagrams that aims to facilitate statistical analyses and computations is the \emph{persistence measure}.  The construction of persistence measures arises from the equivalent representation of persistence diagrams as measures on $\Omega$ of the form $\sum_{x\in D\cap\Omega}n_x\delta_x$, where $x$ ranges over the off-diagonal points in a persistence diagram $D$, $n_x$ is the multiplicity of $x$, and $\delta_x$ is the Dirac measure at $x$ \citep{divol2019density}.  Considering then all Radon measures supported on $\Omega$ gives the following definition.

\begin{definition}[\cite{divol_understanding_2020}]
\label{def:pers_meas}
Let $\mu$ be a Radon measure supported on $\Omega$. For $1\le p<\infty$, the \emph{$p$-total persistence} in this measure setting is defined as 
$$
    \mathrm{Pers}_p(\mu) = \int_{\Omega} \|x-x^\top\|_q^p\diff{\mu(x)},
$$
where $x^\top$ is the projection of $x$ to the diagonal $\partial\Omega$. Any Radon measure with finite $p$-total persistence is called a {\em persistence measure}. 
\end{definition}

Persistence measures resemble heat map versions of persistence diagrams on the upper half-plane where individual persistence points are smoothed, so the intuition on their birth--death lifetimes as described previously and illustrated in Figure \ref{fig:illustration} are lost.  However, they gain the advantage of a linear structure where statistical quantities are well-defined and much more straightforward to compute \citep{divol-2021-estimation}.

By generalizing persistence diagrams to persistence measures, a larger space of persistence measures is obtained, which contains the space of persistence diagrams as a proper subspace \citep{divol_understanding_2020}. In other words, when it comes to discrete measures on the half plane, a persistence measure is nothing but a persistence diagram with different representation.  Thus, any distribution on the space of persistence diagrams is naturally defined on the space of persistence measures. The mean of distributions is also well defined for persistence measures if it is well defined for persistence diagrams (see Section \ref{sec:means}). Note that on the space of persistence measures, the mean of a distribution is a persistence measure, which can be an abstract Radon measure that is not representable as a persistence diagram. In fact, in general, the empirical mean on the space of persistence measures does not lie in the subspace of persistence diagrams. However, it enjoys the computational simplicity which does not happen to other means defined on the space of persistence diagrams.

\paragraph{Metrics for Persistent Homology}

There exist several metrics for persistence diagrams; we focus on the following for its computational feasibility and the known geometric properties of the set of all persistence diagrams (persistence diagram space) induced by this metric.

\begin{definition}
\label{def:wass_dist}
For $p > 0$, the {\em $p$-Wasserstein distance} between two persistence diagrams $D_1$ and $D_2$ is defined as
$$
\mathrm{W}_{p,q}(D_1, D_2) = \inf_{\gamma:D_1 \rightarrow D_2} \Bigg( \sum_{x \in D_1}\|x - \gamma(x) \|_q^p \Bigg)^{1/p},
$$
where $\|\cdot\|_q$ denotes the $q$-norm, $1\leq q \leq \infty$ and $\gamma$ ranges over all bijections between $D_1$ and $D_2$.
\end{definition}

For a persistence diagram $D$, its \emph{$p$-total persistence} is $\mathrm{W}_{p,q}(D, D_\emptyset)$. Notice here that the intuition is similar to that given in Definition \ref{def:pers_meas}, but technically different to the Radon measure setting. We refer to the \emph{space of persistence diagrams} $\mathcal{D}_p$ as the set of all persistence diagrams with finite $p$-total persistence. Equipped with the $p$-Wasserstein distance, $(\mathcal{D}_p, \mathrm{W}_{p,q})$ is a well-defined metric space.

The space of persistence measures is equipped with the following metric.

\begin{definition}
\label{def:opt_dist}
The \emph{optimal partial transport distance} between equivalent representation persistence measures $\mu$ and $\nu$ is 
$$
\mathrm{OT}_{p,q}(\mu, \nu) = \inf_\Pi \Bigg(\int_{\bar{\Omega} \times \bar{\Omega}} \|x - y \|_q^p \diff \Pi(x, y) \Bigg)^{1/p},
$$
with $\bar{\Omega} = \Omega \cup \partial\Omega$. Here, $\Pi$ denotes the set of all Radon measures on $\bar{\Omega} \times \bar{\Omega}$ where for all Borel sets $A, B \subseteq \Omega$, $\Pi(A \times \bar{\Omega}) = \mu(A)$ and $\Pi(B \times \bar{\Omega}) = \nu(B)$. 
\end{definition}

The space of persistence measures $\mathcal{M}_p$, together with $\mathrm{OT}_{p, q}$, is an extension and generalization of $(\mathcal{D}_p, \mathrm{W}_{p,q})$.  In particular, $(\mathcal{D}_p, \mathrm{W}_{p,q})$ is a proper subset of $(\mathcal{M}_p, \mathrm{OT}_{p,q})$ and $\mathrm{OT}_{p,q}$ coincides with $\mathrm{W}_{p,q}$ on the subset $\mathcal{D}_p$ \citep{divol-2021-estimation}.

\subsection{$k$-Means Clustering}

The $k$-means clustering algorithm is one of the most widely known unsupervised learning algorithms.  It groups data points into $k$ distinct clusters based on their similarities and without the need to provide any labels for the data \citep{hartigan_algorithm_1979}. The process begins with a random sample of $k$ points from the set of data points. Each data point is assigned to one of the $k$ points, called \emph{centroids}, based on minimum distance. Data points assigned to the same centroid form a \emph{cluster}. The cluster centroid is updated to be the mean of the cluster. Cluster assignment and centroid update steps are repeated until convergence. 

\begin{algorithm}

  \caption{$k$-Means Algorithm for Persistence Diagrams/Measures}
   \label{code:KM-alg}
   \begin{algorithmic}[1]
   \Require{$n$ persistence diagrams ($D_1, \ldots, D_n$), fixed integer $k$.}
   \Statex
   \State \textbf{Initialization}: From ($D_1, \ldots, D_n$), randomly select $k$ points as the initial centroids ($m_1, \ldots, m_k$).
   \State \textbf{Assignment}:  Assign each data point $D_i$ to cluster $k_i^*$ based on distance to each of the $k$ centroids:
   $$
   k_i^* = \argmin_j \operatorname{dist}(D_i, m_j)
   $$
   where $\operatorname{dist}(\cdot\,,\,\cdot)$ stands for $p$-Wasserstein distance for persistence diagrams and $p$-optimal partial transport distance for persistence measures.
    \State \textbf{Update}: Recompute the Fr\'{e}chet mean or mean persistence measure $m_k$ as the mean of the data points belonging to each cluster.
\Statex
\Statex
Repeat assignment and update steps until cluster assignment does not change, i.e., the algorithm converges.
\end{algorithmic}

\end{algorithm}

A significant limitation of the $k$-means algorithm is that the final clustering output is highly dependent on the initial choice of $k$ centroids. If initial centroids are close in terms of distance, this may result in suboptimal partitioning of the data. Thus, when using the $k$-means algorithm, it is preferable to begin with centroids that are as far from each other as possible such that they already lie in different clusters, which makes it more likely for convergence to the global optimum rather than a local solution. $k$-means$++$ \citep{arthur_k-means_2006} is an initialization algorithm that addresses this issue which we adapt in our implementation of the $k$-means algorithm, in place of the original random initialization step.

Finding the true number of clusters $k$ is difficult in theory. Practically, however, there are several heuristic ways to determine the near optimal number of clusters. For example, the \emph{elbow method} seeks the cutoff point of the loss curve \citep{thorndike1953belongs}; the \emph{information criterion method} constructs likelihood functions for the clustering model \citep{goutte2001feature}; the \emph{silhouette method} tries to optimize the silhouette function of data with respect to the number of clusters \citep{de2015recovering}. An experimental comparison of different determination rules can be found in \cite{pham2005selection}.

The original $k$-means algorithm \cite{hartigan_algorithm_1979} takes vectors as input and calculates the squared Euclidean distance as the distance metric in the cluster assignment step.  In this work, we aim to adapt this algorithm that intakes representations of persistent homology as input (persistence diagrams, persistence measures, and embedded persistence diagrams) and the Wasserstein distance (Definition \ref{def:wass_dist}) and the optimal partial transport distance (Definition \ref{def:opt_dist}) between persistence diagrams and persistence measures, respectively, as metrics.  


\subsection{Means for Persistent Homology}
\label{sec:means}

The existence of a mean and the ability to compute it is fundamental in the $k$-means algorithm.  This quantity and its computation are nontrivial in persistence diagram space.  
A previously proposed notion of means for sets of persistence diagrams is based on \emph{Fr\'{e}chet means}---generalizations of means to arbitrary metric spaces---where the metric is the $p$-Wasserstein distance \cite{turner_frechet_2013}.

Specifically, given a set of persistence diagrams $\mathbf{D} = \{D_1, \ldots, D_N\}$, define an empirical probability measure $\hat{\rho} = \frac{1}{N} \sum_{i=1}^N \delta_{D_i}$. The \textit{Fr\'{e}chet function} for any persistence diagram $D \in \mathbf{D}$ is then
\begin{equation}
\label{eq:frechet_function}
\mathrm{F}_{\hat{\rho}}(D) = \frac{1}{N}\sum_{i=1}^N \mathrm{W}_{p, p}(D, D_i).
\end{equation}
Since any persistence diagram $D$ that minimizes the Fr\'{e}chet function is the Fr\'{e}chet mean of the set $\mathbf{D}$, the Fr\'{e}chet mean in general need not be unique.  This is the case in $(\mathcal{D}_2, W_{2,2})$, which is known to be an Alexandrov space with curvature bounded from below, meaning that geodesics and therefore Fr\'{e}chet means are not even locally unique \citep{turner_frechet_2013}. Nevertheless, they are computable as local minima, using a greedy algorithm that is a variant of the Hungarian algorithm, but have caveats: the Fr\'{e}chet function is not convex on $\mathcal{D}_2$, which means the algorithm often finds only local minima; additionally, there is no initialization rule and arbitrary persistence diagrams are chosen as starting points, meaning that different initializations lead to an unstable performance and potentially different local minima.  In our work, we will focus on the 2-Wasserstein distance and we use the notation $\mathrm{W}_2$ for short.


For a set of persistence measures $\{\mu_1, \ldots, \mu_N\}$, their mean is the \emph{mean persistence measure}, which is simply the empirical mean of the set: $\bar{\mu} = \frac{1}{N}\sum_{i=1}^N\mu_i$. 

We outline the $k$-means algorithm for persistence diagrams and persistence measures in Algorithm \ref{code:KM-alg}.

\subsection{The Karush--Kuhn--Tucker Conditions}
\label{sec:kkt}

The Karush--Kuhn--Tucker (KKT) conditions are first order necessary conditions for a solution in a nonlinear optimization problem to be optimal. 

\begin{definition}
Consider the following minimization problem, 
\begin{align*}
    \min_{x\in\mathbb{R}^d}\quad f(x) \quad
    \text{s.t.}\quad g_i(x) & = 0,\,i=1,\ldots,I\\
    h_j(x) &\leq 0, \,j=1,\ldots,J
\end{align*}
where $f$, $g_i$, and $h_j$ are differentiable functions on $\mathbb{R}^d$. The \emph{Lagrange dual function} is given by
$$
g(u,v) =  \min_{x\in\mathbb{R}^d} f(x)+\sum_{i\in I}u_ig_i(x)+\sum_{j\in J}v_jg_j(x).
$$
The \emph{dual problem} is defined as
$$
    \max_{u\in\mathbb{R}^I,v\in\mathbb{R}^J}\quad g(u,v) \quad
    \text{s.t.}\quad v_j\geq 0,\, j=1,\ldots,J.
$$
The \emph{Karush--Kuhn--Tucker (KKT) conditions} are given by
$$
\begin{cases}
     \nabla f(x) + \sum_{i\in I}u_i\nabla g_i(x) + \sum_{j\in J}v_j\nabla h_j(x) = 0\\
     v_jh_j(x) = 0,\, & j=1,\ldots,J\\
     v_j\geq 0,\, & j=1,\ldots,J\\
     g_i(x) = 0,\, h_j(x)\leq 0,\, & i=1,\ldots,I,\,j=1,\ldots,J.
\end{cases}
$$
\end{definition}

Given any feasible $x$ and $u,v$, the difference $f(x)-g(u,v)$ is called the \emph{dual gap}. The KKT conditions are always sufficient conditions for solutions to be optimal; they are also necessary conditions if strong duality holds, i.e., if the dual gap is equal to zero \citep{boyd2004convex}.

The KKT conditions are applicable if all functions are differentiable over Euclidean spaces. The setup can be generalized for non-differentiable functions over Euclidean spaces but with well-defined subdifferentials \citep{boyd2004convex}. In the case of persistent homology, the objective function is defined on the space of persistence diagrams, which is a metric space without linear structure and therefore requires special treatment.  Specifically, we follow \cite{selim1984k} and define the \emph{KKT point} for our optimization problem and show that the \emph{partial optimal points} are always KKT points.  These notions are formally defined and full technical details of our contributions are given in the next section.


\section{Convergence of the $k$-Means Clustering Algorithm}
\label{sec:conv}

Local minimality of the solution to the optimization problem has been previously established by studying the convergence of the $k$-means algorithm for persistence diagrams \citep{maroulas_kmeans_2017}.  Here, we further investigate the various facets of the optimization problem in a KKT framework and study and derive properties of the solution taking into account the subtleties and refinements of the KKT setting previously discussed in Section \ref{sec:kkt}.  We focus on the 2-Wasserstein distance $\mathrm{W}_2$ as discussed in Section \ref{sec:means}. 


\paragraph{Problem Setup}  

Let $\mathbf{D} = \{D_1,\ldots,D_n\}$ be a set of $n$ persistence diagrams and let $\mathbf{G}=\{G_1,\ldots,G_k\}$ be a partition of $\mathbf{D}$, i.e., $G_i\cap G_j=\emptyset$ for $i\neq j$ and $\cup_{i=1}^kG_i=\mathbf{D}$. Let $\mathbf{Z}=\{Z_1,\ldots,Z_k\}$ be $k$ persistence diagrams representing the centroids of a clustering. The within-cluster sum of distances is defined as
\begin{equation}\label{eq:cost-original}
    \mathbf{F}(\mathbf{G},\mathbf{Z}) = \sum_{i=1}^k\sum_{D_j\in G_i}\mathrm{W}_2^2(D_j,Z_i).
\end{equation}
Define a $k\times n$ matrix $\omega=[\omega_{ij}]$ by setting $\omega_{ij} = 1$ if $D_j \in G_i$ and 0 otherwise.
Then \eqref{eq:cost-original} can be equivalently expressed as
$$ 
      \mathbf{F}(\omega,\mathbf{Z}) = \sum_{i=1}^k\sum_{j=1}^n\omega_{ij}\mathrm{W}_2^2(D_j,Z_i).
$$

We first define a domain for each centroid $Z_i$: Let $\ell_i$ be the number of off-diagonal points of $D_i$; let $V$ be the convex hull of all off-diagonal points in $\cup_{i=1}^n D_i$. Define $\mathbf{S}$ to be the set of persistence diagrams with at most $\sum_{i=1}^n\ell_i$ off-diagonal points in $V$; the set $\mathbf{S}$ is relatively compact \citep[][Theorem 21]{mileyko_probability_2011}.

Our \emph{primal} optimization problem is 
\begin{align*}
\label{eq:primal}
    \min\quad & \mathbf{F}(\omega,\mathbf{Z})\notag\\
    \text{s.t.}\quad & \begin{cases}\sum_{i=1}^k \omega_{ij} =1 & \forall\ j=1,\ldots,n\\
    \omega_{ij} = 0\text{ or }1 & \\
    Z_i\in\mathbf{S} & \forall\ i=1,\ldots,k.\end{cases}\tag{$\mathbf{P}_0$}
\end{align*}
We relax the integer constraints in optimization \eqref{eq:primal} to continuous constraints by setting $\omega_{ij}\in[0,1]$. Let $\Theta$ be the set of all $k\times n$ column stochastic matrices, i.e.,
$$
\Theta = \{\omega\in\mathbb{R}^{k\times n}\mid \omega_{ij}\ge 0,\text{ and } \sum_{i}w_{ij}=1 \ \forall\ j\}.
$$
The \emph{relaxed} optimization problem is then
\begin{equation}
\label{eq:relax}
\min\quad  \mathbf{F}(\omega,\mathbf{Z}) \quad \text{s.t.} \quad  \omega\in\Theta,\, \mathbf{Z}\in\mathbf{S}^k. \tag{$\mathbf{P}$}
\end{equation}
Finally, we reduce $\mathbf{F}(\omega,\mathbf{Z})$ to a univariate function $f(\omega)=\min\{F(\omega,\mathbf{Z}),\mathbf{Z}\in\mathbf{S}^k\}$ by minimizing over the second variable. The \emph{reduced} optimization problem is 
\begin{equation}
\label{eq:reduced}
\min\quad f(\omega) \quad \text{s.t.} \quad \omega\in\Theta. \tag{$\mathbf{RP}$}
\end{equation}

\begin{lemma}
The optimization problems \eqref{eq:primal}, \eqref{eq:relax}, and \eqref{eq:reduced} are equivalent in the sense that the minimum values of the cost functions are equal.
\end{lemma}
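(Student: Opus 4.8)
The plan is to establish the chain of equalities among the optimal values of $(\mathbf{P}_0)$, $(\mathbf{P})$, and $(\mathbf{RP})$ by treating the two adjacent pairs separately. The equality between $(\mathbf{P})$ and $(\mathbf{RP})$ is essentially definitional, coming from reducing a joint minimization to an iterated one, whereas the equality between $(\mathbf{P}_0)$ and $(\mathbf{P})$ is the substantive claim: relaxing the integrality of $\omega$ to $\omega\in\Theta$ does not change the optimal value, i.e., the relaxation is tight. I would handle the latter by the standard observation that, for fixed centroids, the cost is linear in $\omega$ and the relevant feasible region is a product of simplices whose vertices are exactly the hard assignments.

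First I would record that the objective is well posed. For fixed $\omega$, the map $\mathbf{Z}\mapsto\mathbf{F}(\omega,\mathbf{Z})$ is continuous and $\mathbf{S}^k$ lies in a compact set by the relative compactness of $\mathbf{S}$ \citep[Theorem 21]{mileyko_probability_2011}, so the inner minimum defining $f(\omega)=\min\{\mathbf{F}(\omega,\mathbf{Z}):\mathbf{Z}\in\mathbf{S}^k\}$ is attained (on $\overline{\mathbf{S}}^k$ if necessary). Granting this, the equality of the optimal values of $(\mathbf{P})$ and $(\mathbf{RP})$ follows from interchanging the order of minimization,
$$
\min_{\omega\in\Theta} f(\omega)
= \min_{\omega\in\Theta}\ \min_{\mathbf{Z}\in\mathbf{S}^k}\mathbf{F}(\omega,\mathbf{Z})
= \min_{(\omega,\mathbf{Z})\in\Theta\times\mathbf{S}^k}\mathbf{F}(\omega,\mathbf{Z}),
$$
which requires no convexity or regularity beyond the existence of the minima.

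For the equality of $(\mathbf{P}_0)$ and $(\mathbf{P})$, one inequality is immediate: every feasible point of $(\mathbf{P}_0)$ is feasible for $(\mathbf{P})$, since the $\{0,1\}$-valued column-stochastic matrices form a subset of $\Theta$ and the constraint $\mathbf{Z}\in\mathbf{S}^k$ is identical, so the optimal value of $(\mathbf{P})$ is at most that of $(\mathbf{P}_0)$. For the reverse inequality I would fix $\mathbf{Z}$ and use that
$$
\mathbf{F}(\omega,\mathbf{Z})=\sum_{j=1}^n\sum_{i=1}^k\omega_{ij}\,\mathrm{W}_2^2(D_j,Z_i)
$$
is linear in $\omega$ and separates across columns, because the constraints defining $\Theta$ couple only the entries within a single column $j$. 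Each column contribution $\sum_{i=1}^k\omega_{ij}\,\mathrm{W}_2^2(D_j,Z_i)$ is a convex combination of the numbers $\{\mathrm{W}_2^2(D_j,Z_i)\}_{i=1}^k$ and is therefore bounded below by $\min_i\mathrm{W}_2^2(D_j,Z_i)$, with equality achieved by the hard assignment $\omega^\ast_{ij}=1$ at a minimizing index and $0$ elsewhere. The matrix $\omega^\ast$ is $\{0,1\}$-valued and column-stochastic, so $(\omega^\ast,\mathbf{Z})$ is feasible for $(\mathbf{P}_0)$ and satisfies $\mathbf{F}(\omega^\ast,\mathbf{Z})\le\mathbf{F}(\omega,\mathbf{Z})$. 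Minimizing over feasible $(\omega,\mathbf{Z})$ of $(\mathbf{P})$ then gives that the optimal value of $(\mathbf{P}_0)$ is at most that of $(\mathbf{P})$, and the two values coincide. Equivalently, a linear functional on the polytope $\Theta$ attains its minimum at a vertex, and the vertices of $\Theta$ are precisely the feasible assignments of $(\mathbf{P}_0)$.

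I do not expect a deep obstacle: the rounding step is routine once the column-wise separation is noted, and the interchange of minima is formal. The only genuine care is in the bookkeeping---verifying that the inner minimization over centroids is attained given that $\mathbf{S}$ is merely relatively compact (which may require working on the closure $\overline{\mathbf{S}}$ and noting that adjoining boundary diagrams does not decrease the optimal value), and confirming that the hard-assignment matrix $\omega^\ast$ satisfies both the integrality and the column-sum constraints of $(\mathbf{P}_0)$ simultaneously. With these routine checks in place, the minimum values of the three cost functions agree.
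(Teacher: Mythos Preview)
Your proof is correct but takes a somewhat different route from the paper for the equality of \eqref{eq:primal} and \eqref{eq:relax}. The paper works with the reduced function $f(\omega)=\min_{\mathbf{Z}}\mathbf{F}(\omega,\mathbf{Z})$: it shows $f$ is \emph{concave} on $\Theta$ (because it is an infimum of linear functions of $\omega$), invokes the general principle that a concave function on a polytope attains its minimum at an extreme point, and then identifies the extreme points of $\Theta$ as exactly the $0$--$1$ column-stochastic matrices. You instead fix $\mathbf{Z}$ and round $\omega$ column by column, using that each column's contribution is a convex combination of the numbers $\{\mathrm{W}_2^2(D_j,Z_i)\}_i$ and is therefore minimized by a hard assignment. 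Your argument is more elementary and self-contained---it avoids the concavity lemma and the abstract extreme-point fact---while the paper's version isolates the concavity of $f$ as a reusable structural statement. Your closing remark (a linear functional on $\Theta$ attains its minimum at a vertex) is effectively the paper's argument applied at fixed $\mathbf{Z}$, so the two approaches meet there. For the equivalence of \eqref{eq:relax} and \eqref{eq:reduced}, both you and the paper treat it as immediate from iterating the minimization.
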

\begin{proof}
Problems \eqref{eq:relax} and \eqref{eq:reduced} are equivalent as they have the same constraints and equal cost functions. It suffices to prove that the reduced problem \eqref{eq:reduced} and the primal problem \eqref{eq:primal} are equivalent.

\noindent
\emph{Claim 1}: $f(w)$ is a concave function on the convex set $\Theta$.

Let $\omega_1,\omega_2\in\Theta$, and $\lambda\in[0,1]$. Set $\omega_\lambda=\lambda\omega_1+(1-\lambda)\omega_2$. Note that $\sum_{i}(\omega_\lambda)_{ij} = \lambda\sum_{i}(\omega_1)_{ij}+(1-\lambda)\sum_i(\omega_2)_{ij}=1$. Thus $\omega_\lambda\in \Theta$. Furthermore,
\begin{align*}
f(\omega_\lambda) &= \min_{\mathbf{Z}}\{\mathbf{F}(\omega_\lambda,\mathbf{Z})\}\\
&= \min_{\mathbf{Z}}\{\lambda\mathbf{F}(\omega_1,\mathbf{Z})+(1-\lambda)\mathbf{F}(\omega_2,\mathbf{Z})\}\\
&\ge \lambda\min_{\mathbf{Z}}\{\mathbf{F}(\omega_1,\mathbf{Z})\}+(1-\lambda)\min_{\mathbf{Z}}\{\mathbf{F}(\omega_2,\mathbf{Z})\}\\
&= \lambda f(\omega_1)+(1-\lambda)f(\omega_2).
\end{align*}
As a consequence, $f(\omega)$ always attains its minimum on the extreme points of $\Theta$.  Note that an extreme point is a point that cannot lie within the open interval of the convex combination of any two points.

\noindent
\emph{Claim 2}: The extreme points of $\Theta$ satisfy the constraints in \eqref{eq:primal}.

The extreme points of $\Theta$ are 0--1 matrices which have exactly one 1 in each column \citep{cao2022centrosymmetric}. They are precisely the constraints in \eqref{eq:primal}.

In summary, any optimal point for \eqref{eq:reduced} is also an optimal point for \eqref{eq:primal} and vice versa. Thus, all three optimization problems are equivalent. 
\end{proof}

\subsection{Convergence to Partial Optimal Points}

We prove that Algorithm \ref{code:KM-alg} for persistence diagrams always converges to partial optimal points in a finite number of steps. The proof is an extension of Theorem 3.2 in \cite{maroulas_kmeans_2017}.

\begin{definition}{\citep{selim1984k}}
A point $(\omega_\star,\mathbf{Z}_\star)$ is a \emph{partial optimal point} of \eqref{eq:relax} if 
\begin{enumerate}[(i)]
\item $\mathbf{F}(\omega_\star,\mathbf{Z}_\star)\le \mathbf{F}(\omega,\mathbf{Z}_\star)$ for all $\omega\in\Theta$, i.e., $\omega_\star$ is a minimizer of the function $\mathbf{F}(\cdot,\mathbf{Z}_\star)$; and 

\item $\mathbf{F}(\omega_\star,\mathbf{Z}_\star)\le \mathbf{F}(\omega_\star,\mathbf{Z})$ for all $\mathbf{Z}\in\mathbf{S}^k$, i.e., $\mathbf{Z}_\star$ is a minimizer of the function $\mathbf{F}(\omega_\star,\cdot)$.
\end{enumerate}
\end{definition}

\begin{theorem}{\citep[][Theorem 3.2]{maroulas_kmeans_2017}}
\label{thm:converge-pos}
The $k$-means algorithm over $(\mathcal{D}_2, \mathrm{W}_{2})$ converges to a partial optimal point in finitely many steps.
\end{theorem}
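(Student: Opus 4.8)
The plan is to run the standard alternating-minimization argument for $k$-means, adapted to $(\mathcal{D}_2, \mathrm{W}_2)$, treating the cost $\mathbf{F}$ as a monotone Lyapunov function whose associated assignment variable ranges over a finite set. First I would observe that the two steps of Algorithm \ref{code:KM-alg} are exactly coordinatewise minimizations of $\mathbf{F}(\omega, \mathbf{Z})$. Given centroids $\mathbf{Z}^{(t)}$, the assignment step $k_i^\star = \argmin_j \mathrm{W}_2(D_i, m_j)$ produces an $\omega^{(t+1)}$ that minimizes $\mathbf{F}(\cdot, \mathbf{Z}^{(t)})$ over $\Theta$: the objective is linear in $\omega$ with the column constraint $\sum_i \omega_{ij} = 1$, so it is minimized column-by-column by placing all mass on the nearest centroid, which recovers the hard $0$--$1$ assignment (consistent with the extreme-point argument in the equivalence lemma above). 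Given $\omega^{(t+1)}$, the update step replaces each centroid by a Fréchet mean of its cluster, which minimizes $\mathbf{F}(\omega^{(t+1)}, \cdot)$ over $\mathbf{S}^k$. Consequently
$$
\mathbf{F}(\omega^{(t+1)}, \mathbf{Z}^{(t+1)}) \le \mathbf{F}(\omega^{(t+1)}, \mathbf{Z}^{(t)}) \le \mathbf{F}(\omega^{(t)}, \mathbf{Z}^{(t)}),
$$
so the cost sequence is non-increasing and bounded below by $0$.

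Second, I would establish finiteness. The assignment matrices are $0$--$1$ and column-stochastic, so there are at most $k^n$ of them. Adopting the convention that the assignment step retains the current assignment whenever it is already optimal, any genuine change of assignment strictly lowers $\mathbf{F}$. Combined with monotonicity and the finite range of $\omega$, the assignment can change only finitely often; hence after finitely many iterations it stabilizes at some $\omega_\star$, with corresponding centroids $\mathbf{Z}_\star$. At this fixed point, $\omega_\star$ is, by the assignment step, a minimizer of $\mathbf{F}(\cdot, \mathbf{Z}_\star)$, and $\mathbf{Z}_\star$ is, by the update step, a minimizer of $\mathbf{F}(\omega_\star, \cdot)$; these are precisely conditions (i) and (ii) defining a partial optimal point.

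The step requiring the most care — and the obstacle specific to persistence diagrams rather than $\mathbb{R}^d$ — is verifying that the update step is well-defined, i.e., that each cluster-wise Fréchet mean exists within the centroid domain. This is exactly why $\mathbf{S}$ was defined as the diagrams with at most $\sum_{i=1}^n \ell_i$ off-diagonal points lying in the convex hull $V$: restricting to $\mathbf{S}$ makes the centroid search take place over a relatively compact set \citep[][Theorem 21]{mileyko_probability_2011}, so the continuous Fréchet functional attains its minimum there and the update genuinely returns a valid centroid in $\mathbf{S}$. I would also note that the non-uniqueness of Fréchet means in this Alexandrov space (geodesics need not be locally unique) does not threaten the argument: monotonicity only requires selecting \emph{some} minimizer at each update, and finite termination is driven by the finiteness of the assignment variable, not by single-valuedness of the centroid map. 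The only remaining bookkeeping — that the cluster Fréchet mean has off-diagonal points bounded in number and contained in $V$, so that it indeed lands in $\mathbf{S}$ — follows from the characterization of minimizers of the $\mathrm{W}_2$ Fréchet function \citep{turner_frechet_2013}, which I would invoke to close the loop.
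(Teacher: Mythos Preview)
Your proposal is correct and follows essentially the same alternating-minimization argument as the paper: show each of the two steps weakly decreases $\mathbf{F}$, then use finiteness of the set of $0$--$1$ assignment matrices to conclude stabilization at a partial optimal point. You are in fact more careful than the paper on two points it leaves implicit---the tie-breaking convention that forces a strict decrease whenever the assignment changes (ruling out cycling among equal-cost assignments), and the verification that Fr\'{e}chet means exist and remain in $\mathbf{S}$---so your write-up would only strengthen the published proof.
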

\begin{proof}
The value of $\mathbf{F}$ only changes at two steps during each iteration. Fix an iteration $t$ and let $\mathbf{Z}^{(t)} = \{Z_1^{(t)},\ldots,Z_k^{(t)}\}$ be the $k$ centroids from iteration $t$. At the first step of the $(t+1)$st iteration, since we are assigning all persistence diagrams to a closest centroid, for any datum $D_j$,
\begin{align*}
\sum_{i=1}^k\omega_{ij}^{(t+1)}\mathrm{W}_2^2(D_j,Z_i^{(t)})=&\min_i \{\mathrm{W}_2^2(D_j,Z_i^{(t)})\}\\
=&\sum_{i=1}^k\omega_{ij}^{(t)}\min_i \{\mathrm{W}_2^2(D_j,Z_i^{(t)})\}\\
\le&\sum_{i=1}^k\omega_{ij}^{(t)}\mathrm{W}_2^2(D_j,Z_i^{(t)}).
\end{align*}
Summing over $j=1,\ldots,n$, we have $\mathbf{F}(\omega^{(t+1)},\mathbf{Z}^{(t)})\le \mathbf{F}(\omega^{(t)},\mathbf{Z}^{(t)})$.

At the second step of the $(t+1)$st iteration, by the definition of Fr\'{e}chet means, we have
$$
Z_i^{(t+1)} = \mathop{\arg\min}_Z \sum_{j=1}^n\omega_{ij}^{(t+1)}\mathrm{W}_2^2(D_j,Z).
$$
Thus
$$
\sum_{j=1}^n\omega_{ij}^{(t+1)}\mathrm{W}_2^2(D_j,Z_i^{(t+1)})\le \sum_{j=1}^n\omega_{ij}^{(t+1)}\mathrm{W}_2^2(D_j,Z_i^{(t)}).
$$
Summing over $i=1,\ldots,k$, we have $\mathbf{F}(\omega^{(t+1)},\mathbf{Z}^{(t+1)})\le\mathbf{F}(\omega^{(t+1)},\mathbf{Z}^{(t)})$. 

In summary, the function $\mathbf{F}(\omega,\mathbf{Z})$ is nonincreasing at each iteration. Since there are only finitely many extreme points in $\Theta$ \citep{cao2022centrosymmetric}, the algorithm is stable at a point $\omega_\star$ after finitely many steps. If we fix one Fr\'{e}chet mean $\mathbf{Z}_\star$ for $\mathbf{F}(\omega_\star,\cdot)$, then $(\omega_\star,\mathbf{Z}_\star)$ is a partial optimal point for \eqref{eq:relax} by construction.
\end{proof}
Theorem \ref{thm:converge-pos} is a revision of the original statement in \cite{maroulas_kmeans_2017}, since the algorithm may not converge to local minima of \eqref{eq:relax}. 

Note that $\mathbf{F}(\omega,\mathbf{Z})$ is differentiable as a function of $\omega$ but not differentiable as a function of $\mathbf{Z}$. 
However, given that we are restricting to $(\mathcal{D}_2, \mathrm{W}^2_2)$, there is a differential structure and in particular, the square distance function admits gradients in the Alexandrov sense \citep{turner_frechet_2013}.

Given a point $D\in \mathcal{D}_2$, let $\Sigma_D$ be the set of all nontrivial unit-speed geodesics emanating from $D$.  The angle between two geodesics $\gamma_1,\gamma_2$ is defined as
$$
\angle_D(\gamma_1,\gamma_2) = \arccos\bigg(\lim_{s,t\to 0}\frac{s^2+t^2-\mathrm{W}_2(\gamma_1(s),\gamma_2(t))}{2st}\bigg).
$$
By identifying geodesics $\gamma_1\sim\gamma_2$ with zero angles, we define the space of directions $(\widehat{\Sigma}_D,\angle_D)$ as the completion of $\Sigma_D/\sim$ under the angular metric $\angle_D$. The tangent cone $T_D=\widehat{\Sigma}_D\times[0,\infty)/\widehat{\Sigma}_D\times\{0\}$ is the Euclidean cone over $\widehat{\Sigma}_D$ with the cone metric defined as
$$
\mathrm{C}_D([\gamma_1,t],[\gamma_2,s])=s^2+t^2-2st\cos\angle_D(\gamma_1,\gamma_2).
$$
The inner product is defined as
$
\langle[\gamma_1,t],\, [\gamma_2,s] \rangle_D = st\cos\angle_D(\gamma_1,\gamma_2).
$
Let $v=[\gamma,t]\in T_D$. For any function $h:\mathcal{P}\to\mathbb{R}$, the differential of $h$ at a point $D$ is a map $T_D\to\mathbb{R}$ defined by
$$
\mathrm{d}_Dh(v) = \lim_{s\to 0} \frac{h(\gamma_v(s))-h(D)}{s}
$$
if the limit exists and is independent of selection of $\gamma_v$. The gradient of $h$ at $D$ is a tangent vector $\nabla_Dh\in T_D$ such that (i) $\mathrm{d}_Dh(u)\le \langle \nabla_Dh,u \rangle_D$ for all $u\in T_D$; and (ii) $\mathrm{d}_Dh(\nabla_Dh) = \langle\nabla_Dh,\nabla_Dh\rangle$.
Let $Q_D(\cdot)=\mathrm{W}_2^2(\cdot,D)$ be the squared distance to $D$. 
\begin{lemma}{\citep{turner_frechet_2013}}\label{lem:diff-Q}
\begin{enumerate}
\item $Q_D$ is Lipschitz continuous on any bounded set;
    \item The differential and gradient are well-defined for $Q_D$;
    \item If $D'$ is a local minimum of $Q_D$ then $\nabla_{D'}Q_D=0$.
\end{enumerate}
\end{lemma}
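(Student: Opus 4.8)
The plan is to treat the three assertions in turn, using throughout that $(\mathcal{D}_2,\mathrm{W}_2)$ is a geodesic space that is Alexandrov with curvature bounded below by $0$ \citep{turner_frechet_2013}; all three are then instances of standard properties of squared-distance functions on such spaces. For the first assertion I would argue directly from the triangle inequality: $\mathrm{W}_2(\cdot,D)$ is $1$-Lipschitz, so writing the difference of squares as a product gives
$$|Q_D(D_1)-Q_D(D_2)|\le \mathrm{W}_2(D_1,D_2)\,\bigl(\mathrm{W}_2(D_1,D)+\mathrm{W}_2(D_2,D)\bigr),$$
and on any bounded set the second factor is bounded by a constant, so $Q_D$ is Lipschitz there with constant twice that bound.

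For the second assertion the key step is semiconcavity of $Q_D$. Since $\mathcal{D}_2$ has curvature bounded below by $0$, the Alexandrov distance comparison forces $s\mapsto Q_D(\gamma(s))-s^2$ to be concave along every unit-speed geodesic $\gamma$, i.e.\ $Q_D$ is $2$-concave. Semiconcavity guarantees that for each $v$ the one-sided limit defining $\mathrm{d}_DQ_D(v)$ exists and is independent of the representative geodesic, and that $\mathrm{d}_DQ_D$ is a concave, positively homogeneous function on the tangent cone $T_D$; the gradient $\nabla_DQ_D$ is then the element of $T_D$ realizing conditions (i) and (ii). This is where the main obstacle lies: $\mathcal{D}_2$ is infinite-dimensional and not locally compact, so the usual finite-dimensional existence proof---which extracts the gradient by realizing a supremum over the compact space of directions $\widehat{\Sigma}_D$---is unavailable. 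I would resolve this by exploiting the explicit geodesic structure of $\mathrm{W}_2$, in which geodesics interpolate an optimal matching linearly, to compute $\mathrm{d}_DQ_D(v)$ as a first variation of the matching cost and to exhibit the gradient vector directly, following \citep{turner_frechet_2013}.

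For the third assertion, note first that a local minimum $D'$ of $Q_D$ satisfies $\mathrm{d}_{D'}Q_D(v)\ge 0$ for every $v\in T_{D'}$, since $Q_D(\gamma(s))\ge Q_D(D')$ for small $s$ along any geodesic $\gamma$ emanating from $D'$. Taking $\gamma$ to be a minimizing geodesic from $D'$ toward $D$, distance-realization gives $Q_D(\gamma(s))=(\mathrm{W}_2(D',D)-s)^2$, whose right derivative at $0$ equals $-2\,\mathrm{W}_2(D',D)$; nonnegativity then forces $\mathrm{W}_2(D',D)=0$, i.e.\ $D'=D$. At $D'=D$ every unit-speed geodesic satisfies $Q_D(\gamma(s))=s^2$, so $\mathrm{d}_DQ_D(v)=\lim_{s\to 0}s^2/s=0$ for all $v$; the vector $0\in T_D$ then satisfies (i), as $0\le\langle 0,v\rangle_D$, and (ii), as $0=\langle 0,0\rangle_D$, and since part 2 guarantees the gradient is well-defined, $\nabla_DQ_D=0$. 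Parts 1 and 3 are short once the Alexandrov structure and the explicit form of geodesics are in hand, so I expect the second assertion---the construction of the gradient in the non-locally-compact setting---to be the only genuinely delicate point.
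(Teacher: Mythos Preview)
The paper does not prove this lemma; it is quoted from \cite{turner_frechet_2013} and used as a black box, so there is no in-paper argument to compare your proposal against.

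Your outline is correct and aligns with the treatment in the cited source. Part~1 is the standard difference-of-squares bound. For Part~2 you have correctly located the only genuine difficulty---existence of the gradient in a non-locally-compact Alexandrov space---and your proposed resolution via semiconcavity together with the explicit first-variation of the optimal-matching cost is precisely how Turner et al.\ handle it. Your argument for Part~3 is valid for $Q_D$ as stated, but it proceeds by first pinning down the unique local minimizer as $D'=D$ and then computing there. Note, however, that the paper's downstream application of the lemma is to the weighted sum $\mathbf{F}(\omega_\star,\cdot)=\sum_{i,j}(\omega_\star)_{ij}\,Q_{D_j}(Z_i)$, whose local minimizers are Fr\'echet means and not any single $D_j$; your route does not transfer to that setting. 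Turner et al.\ instead establish the vanishing of the gradient at local minima for the full Fr\'echet functional directly from the first-variation formula, and this is what the paper is implicitly invoking when it writes ``Since $\mathbf{F}$ is a linear combination of squared distance functions, by Lemma~\ref{lem:diff-Q}, the gradient exists'' and later ``by Lemma~\ref{lem:diff-Q}, the gradient vector is zero.'' So your proof of the lemma is fine, but if you intend it to carry the weight the paper places on it, Part~3 should be argued at the level of nonnegative combinations of squared-distance functions rather than a single $Q_D$.
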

Since $\mathbf{F}$ is a linear combination of squared distance functions, by Lemma \ref{lem:diff-Q}, the gradient exists for any variable $Z_i$. Thus, we formally write the gradient of $\mathbf{F}$ as 
$
\nabla_{\mathbf{Z}}\mathbf{F} = (\nabla_{Z_1}\mathbf{F},\ldots,\nabla_{Z_k}\mathbf{F}).
$

Following \cite{selim1984k}, we define the KKT points for \eqref{eq:relax} as follows.
\begin{definition}
A point $(\omega_\bullet,\mathbf{Z}_\bullet)$ is a \emph{Karush--Kuhn--Tucker (KKT) point} of \eqref{eq:relax} if there exists $\mu_1,\ldots,\mu_n\in\mathbb{R}$ such that
$$
\begin{cases}
\displaystyle\frac{\partial \mathbf{F}}{\partial \omega_j}(\omega_\bullet,\mathbf{Z}_\bullet) + \mu_j \bm{1} = 0 & \forall\ j=1,\ldots,n\\
\bm{1}^\top(\omega_\bullet)_j - 1 = 0 & \forall\ j=1,\ldots,n\\
\nabla_\mathbf{Z}\mathbf{F}(\omega_\bullet,\mathbf{Z}_\bullet) = 0 &
\end{cases}
$$
where $\bm{1}$ is the all-1 vector.
\end{definition}

\begin{theorem}
The partial optimal points of \eqref{eq:relax} are KKT points of \eqref{eq:relax}.
\end{theorem}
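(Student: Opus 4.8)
The plan is to verify the three displayed KKT conditions one block at a time, using that a partial optimal point is by definition a separate minimizer in $\omega$ (with $\mathbf{Z}_\star$ fixed) and in $\mathbf{Z}$ (with $\omega_\star$ fixed). The equality condition $\bm{1}^\top(\omega_\star)_j - 1 = 0$ is free: a partial optimal point has $\omega_\star\in\Theta$, and every column of a column-stochastic matrix sums to one. So the substance is to exhibit the multipliers $\mu_j$ witnessing stationarity in $\omega$ and to show the Alexandrov gradient in $\mathbf{Z}$ vanishes.

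First I would handle the $\mathbf{Z}$-block. By partial optimality (ii), $\mathbf{Z}_\star$ minimizes $\mathbf{F}(\omega_\star,\cdot)$ over $\mathbf{S}^k$, and since $\mathbf{F}(\omega_\star,\mathbf{Z}) = \sum_{i=1}^k\big(\sum_{j=1}^n(\omega_\star)_{ij}Q_{D_j}(Z_i)\big)$ decouples across the centroids, each $Z_{\star,i}$ minimizes the nonnegatively weighted sum $\sum_j(\omega_\star)_{ij}Q_{D_j}$. This is a scalar multiple of a Fr\'echet function, whose Alexandrov gradient $\nabla_{Z_i}\mathbf{F}$ the excerpt has already declared well-defined via Lemma \ref{lem:diff-Q}. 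Extending the minimality criterion of Lemma \ref{lem:diff-Q}(3) from a single $Q_D$ to this combination by linearity of the differential $\mathrm{d}_{Z_i}$ yields $\nabla_{Z_i}\mathbf{F}(\omega_\star,\mathbf{Z}_\star)=0$ for each $i$, hence $\nabla_{\mathbf{Z}}\mathbf{F}(\omega_\star,\mathbf{Z}_\star)=0$. I would pause here to argue that the constraint $\mathbf{Z}\in\mathbf{S}^k$ is inactive at $\mathbf{Z}_\star$: the set $\mathbf{S}$ was defined precisely to contain every Fr\'echet mean (off-diagonal points lying in the convex hull $V$, in bounded number), so the minimizer over $\mathbf{S}^k$ is a genuine local minimizer in $(\mathcal{D}_2,\mathrm{W}_2)$ and no boundary term intervenes.

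Next I would treat the $\omega$-block, which is a linear program. With $\mathbf{Z}_\star$ fixed, $\mathbf{F}(\cdot,\mathbf{Z}_\star)=\sum_{i,j}\omega_{ij}\,\mathrm{W}_2^2(D_j,Z_{\star,i})$ is linear in $\omega$ and $\Theta$ is a product of column simplices, so partial optimality (i) means $\omega_\star$ solves an LP that decouples across columns. Writing $d_{ij}:=\mathrm{W}_2^2(D_j,Z_{\star,i})=\partial\mathbf{F}/\partial\omega_{ij}$, the minimum of $\sum_i\omega_{ij}d_{ij}$ over the simplex equals $\min_i d_{ij}$ and puts mass only on minimizing indices; hence $(\omega_\star)_{ij}>0$ forces $d_{ij}=\min_{i'}d_{i'j}$. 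Setting $\mu_j:=-\min_i d_{ij}$ gives $d_{ij}+\mu_j\ge 0$ for all $i$ with equality on the support of column $j$, which together with the nonnegativity multipliers $v_{ij}:=d_{ij}+\mu_j$ and complementary slackness $v_{ij}(\omega_\star)_{ij}=0$ is exactly Lagrangian stationarity $\partial\mathbf{F}/\partial\omega_j+\mu_j\bm{1}=0$ in the sense of \cite{selim1984k}. Combining the three verified conditions shows $(\omega_\star,\mathbf{Z}_\star)$ is a KKT point.

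The step I expect to be the main obstacle is the $\mathbf{Z}$-stationarity, not the $\omega$-stationarity: the latter is a textbook simplex argument, whereas the former requires that the vanishing-gradient criterion of Lemma \ref{lem:diff-Q}, stated for a single squared-distance function $Q_D$ on the Alexandrov space $(\mathcal{D}_2,\mathrm{W}_2)$, transfer to a minimizer of a nonnegatively weighted sum of such functions, and that this minimizer not be pinned to the boundary of $\mathbf{S}^k$. Confirming linearity of the Alexandrov differential across the sum and the interiority of Fr\'echet means in $\mathbf{S}$ are the two points I would justify most carefully.
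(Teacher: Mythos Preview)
Your proposal is correct and follows essentially the same two-block strategy as the paper: invoke standard KKT conditions for the linear program in $\omega$ over $\Theta$ to get the first two conditions, and invoke Lemma~\ref{lem:diff-Q} on the minimizer $\mathbf{Z}_\star$ of $\mathbf{F}(\omega_\star,\cdot)$ to get $\nabla_{\mathbf{Z}}\mathbf{F}=0$. You supply considerably more detail than the paper---explicitly constructing $\mu_j=-\min_i d_{ij}$, noting the decoupling across centroids, and flagging the interiority of $\mathbf{Z}_\star$ in $\mathbf{S}^k$ and the passage from a single $Q_D$ to a weighted sum---whereas the paper's proof dispatches each block in a single sentence.
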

\begin{proof}
Suppose $(\omega_\star,\mathbf{Z}_\star)$ is a partial optimal point. Since $\omega_\star$ is a minimizer of the function $\mathbf{F}(\cdot,\mathbf{Z}_\star)$, it satisfies the KKT conditions of the constrained optimization problem
$$
    \min_{\omega} \mathbf{F}(w,\mathbf{Z}_\star) \quad \text{s.t.} \quad \omega\in\Theta
$$
which are exactly the first two conditions of being a KKT point. Similarly, since $\mathbf{Z}_\star$ is a minimizer of the function $\mathbf{F}(\omega_\star,\cdot)$, by Lemma \ref{lem:diff-Q}, the gradient vector is zero, which is the third condition of a KKT point. Thus, $(\omega_\star,\mathbf{Z}_\star)$ is a KKT point of \eqref{eq:relax}.
\end{proof}
Conversely, suppose $(\omega_\star,\mathbf{Z}_\star)$ is a KKT point. Since $\mathbf{F}(\cdot,\mathbf{Z}_\star)$ is a linear function of $\omega$ and $\omega\in\Theta$ is a linear constraint, the first two conditions are sufficient for $\omega_\star$ to be a minimizer of $\mathbf{F}(\cdot,\mathbf{Z}_\star)$. Thus $\omega_\star$ satisfies the first condition of a partial optimal point. However, the condition $\nabla_{\mathbf{Z}}\mathbf{F}(\omega_\star,\mathbf{Z}_\star)=0$ cannot guarantee that $\mathbf{Z}_\star$ satisfies the second condition of a partial optimal point. Note that the original proof in \cite{selim1984k} of Theorem 4 is also incomplete.

Moreover, being a partial optimal point or KKT point is not sufficient for being a local minimizer of the original optimization problem. A counterexample can be found in \cite{selim1984k}.

\subsection{Convergence to Local Minima}

We now give a characterization of the local minimality of $\omega$ using directional derivatives of the objective function $f(\omega)$ in \eqref{eq:reduced}. For any fixed element $\omega_0\in\Theta$, let $v\in\mathbb{R}^{kn}$ be a feasible direction, i.e., $\omega_0+tv\in\Theta$ for small $t$. The directional derivative of $f(\omega)$ along $v$ is defined as 
$$
Df(\omega_0;v):=\lim_{t\to 0^+}\frac{f(\omega_0+tv)-f(\omega_0)}{t},
$$
if the limit exists. Let $\mathcal{Z}(\omega_0)$ be the set of all $\mathbf{Z}$ minimizing the function $\mathbf{F}(\omega_0,\cdot)$. We have the following explicit formula for the directional derivatives of $f(\omega)$.
\begin{lemma}
For any $\omega_0\in\Theta$ and any feasible direction $v$, the directional derivative $Df(\omega_0;v)$ exists, and
$$
Df(\omega_0;v) = \min_{\mathbf{Z}\in\mathcal{Z}(\omega_0)}\Bigg\{\sum_{i=1}^k\sum_{j=1}^nv_{ij}\mathrm{W}_2^2(D_j,Z_i)\Bigg\}.
$$
\end{lemma}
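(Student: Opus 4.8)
The plan is to recognize $f(\omega)=\min_{\mathbf{Z}\in\mathbf{S}^k}\mathbf{F}(\omega,\mathbf{Z})$ as a pointwise minimum of a family of functions that are \emph{affine}, in fact linear, in $\omega$: for each fixed $\mathbf{Z}$, the map $\omega\mapsto\mathbf{F}(\omega,\mathbf{Z})=\sum_{i,j}\omega_{ij}\mathrm{W}_2^2(D_j,Z_i)$ has gradient in $\omega$ with entries $\mathrm{W}_2^2(D_j,Z_i)$. This is exactly the setting of Danskin's (envelope) theorem for minimization, whose conclusion is precisely the claimed formula. Rather than invoke it as a black box, I would prove the identity by a direct two-sided estimate, since this requires only relative compactness of $\mathbf{S}$ and joint continuity of $\mathrm{W}_2^2$, and in particular avoids any appeal to a differentiable structure on the nonlinear space $\mathcal{D}_2$ in the centroid variable.

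First I would establish the easy upper estimate. Fix any $\mathbf{Z}\in\mathcal{Z}(\omega_0)$. Since $f(\omega_0+tv)\le\mathbf{F}(\omega_0+tv,\mathbf{Z})$ while $\mathbf{F}(\omega_0,\mathbf{Z})=f(\omega_0)$, linearity of $\mathbf{F}$ in $\omega$ gives, for all small $t>0$,
$$
\frac{f(\omega_0+tv)-f(\omega_0)}{t}\le\sum_{i,j}v_{ij}\mathrm{W}_2^2(D_j,Z_i).
$$
Taking $\limsup_{t\to0^+}$ and then the minimum over $\mathbf{Z}\in\mathcal{Z}(\omega_0)$ bounds the upper limit of the difference quotient by the right-hand side of the claimed formula.

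The harder direction is the matching lower estimate on the $\liminf$, which amounts to an upper-semicontinuity statement for the argmin correspondence $\omega\mapsto\mathcal{Z}(\omega)$. Along a sequence $t_n\to0^+$ realizing the $\liminf$, I would select minimizers $\mathbf{Z}_n$ with $f(\omega_0+t_nv)=\mathbf{F}(\omega_0+t_nv,\mathbf{Z}_n)$; these exist because $\mathbf{S}$ is relatively compact and $\mathbf{F}(\omega,\cdot)$ is continuous (passing to the compact closure $\bar{\mathbf{S}}^k$ if necessary). By relative compactness I extract a subsequence $\mathbf{Z}_n\to\mathbf{Z}_*$. Using joint continuity of $\mathrm{W}_2^2$ together with continuity of $f$, which is Lipschitz as a minimum of uniformly Lipschitz affine functions (the coefficients $\mathrm{W}_2^2(D_j,Z_i)$ being bounded on the compact $\bar{\mathbf{S}}$), I would show $\mathbf{F}(\omega_0,\mathbf{Z}_*)=\lim_n\mathbf{F}(\omega_0,\mathbf{Z}_n)=f(\omega_0)$, so that $\mathbf{Z}_*\in\mathcal{Z}(\omega_0)$. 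Finally, since $\mathbf{F}(\omega_0,\mathbf{Z}_n)\ge f(\omega_0)$, linearity in $\omega$ yields
$$
\frac{f(\omega_0+t_nv)-f(\omega_0)}{t_n}\ge\sum_{i,j}v_{ij}\mathrm{W}_2^2(D_j,(Z_n)_i),
$$
and letting $n\to\infty$ along the subsequence the right-hand side converges to $\sum_{i,j}v_{ij}\mathrm{W}_2^2(D_j,(Z_*)_i)\ge\min_{\mathbf{Z}\in\mathcal{Z}(\omega_0)}\sum_{i,j}v_{ij}\mathrm{W}_2^2(D_j,Z_i)$.

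Combining the two estimates, the $\limsup$ is at most and the $\liminf$ at least the common quantity, so the directional derivative exists and equals it. The main obstacle is this lower bound, precisely the claim that perturbed minimizers accumulate at a minimizer of the unperturbed problem; this is where relative compactness of $\mathbf{S}$ (Theorem 21 of \citet{mileyko_probability_2011}) and continuity of the squared Wasserstein distance do the essential work, allowing us to sidestep the absence of a linear structure on $\mathcal{D}_2$ in the centroid variable.
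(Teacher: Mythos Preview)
Your proposal is correct and follows essentially the same route as the paper: both prove the upper bound by fixing $\mathbf{Z}\in\mathcal{Z}(\omega_0)$ and using linearity in $\omega$, and both prove the lower bound by choosing minimizers $\mathbf{Z}_n\in\mathcal{Z}(\omega_0+t_nv)$ along a sequence realizing the $\liminf$, extracting a convergent subsequence via relative compactness of $\mathbf{S}^k$, showing the limit lies in $\mathcal{Z}(\omega_0)$, and then passing to the limit in the linear part. Your framing via Danskin's theorem and the slightly cleaner use of $\mathbf{F}(\omega_0,\mathbf{Z}_n)\ge f(\omega_0)$ in place of the paper's explicit two-term split are cosmetic differences only.
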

\begin{proof}
Denote
$$
L(\omega_0;v) := \min_{\mathbf{Z}\in\mathcal{Z}(\omega_0)}\Bigg\{\sum_{i=1}^k\sum_{j=1}^nv_{ij}\mathrm{W}_2^2(D_j,Z_i)\Bigg\}.
$$
For any $\mathbf{Z}_\star\in\mathcal{Z}(\omega_0)$, we have
    \begin{align*}
    \frac{f(\omega_0+tv)-f(\omega_0)}{t}=&\frac{\min_{\mathbf{Z}\in\mathbf{S}^k}\{\mathbf{F}(\omega_0+tv,\mathbf{Z})\}-\min_{\mathbf{Z}\in\mathbf{S}^k}\{\mathbf{F}(\omega_0,\mathbf{Z})\}}{t}\\
    \le & \frac{\mathbf{F}(\omega_0+tv,\mathbf{Z}_\star)-\mathbf{F}(\omega_0,\mathbf{Z}_\star)}{t} =\sum_{i=1}^k\sum_{j=1}^nv_{ij}\mathrm{W}_2^2(D_j,Z_i).
    \end{align*}
Thus we have 
\begin{equation}\label{eq:limsup}
\limsup_{t\to 0^+}\frac{f(\omega_0+tv)-f(\omega_0)}{t}\le L(\omega_0;v).
\end{equation}
It suffices to prove
$$
\liminf_{t\to 0^+}\frac{f(\omega_0+tv)-f(\omega_0)}{t}\ge L(\omega_0;v).
$$
To do this, let ${t_j}$ be a sequence tending to 0 and 
$$
\frac{f(\omega_0+t_jv)-f(\omega_0)}{t_j}\to \liminf_{t\to 0^+}\frac{f(\omega_0+tv)-f(\omega_0)}{t}
$$
as $j\to\infty$. Choose $\mathbf{Z}_j\in\mathcal{Z}(\omega_0+t_jv)$ for each $j$. Since $\mathbf{S}^k$ is relatively compact, there is a subsequence of $\{\mathbf{Z}_j\}$ converging to some point $\mathbf{Z}_\star$. We still denote the subsequence by $\{\mathbf{Z}_j\}$ for the sake of convenience. By Lemma \ref{lem:diff-Q}, $\mathbf{F}(\omega,\mathbf{Z})$ is continuous. Thus, for any $\mathbf{Z}\in\mathbf{S}^k$,
$$
\mathbf{F}(\omega_0,\mathbf{Z}_\star) = \lim_{j\to\infty}\mathbf{F}(\omega_0+t_jv,\mathbf{Z}_j) \le  \lim_{j\to\infty} \mathbf{F}(\omega_0+t_jv,\mathbf{Z})=\mathbf{F}(\omega_0,\mathbf{Z}).
$$
That is, $\mathbf{Z}_\star\in\mathcal{Z}(\omega_0)$. Therefore, we have
\begin{align*}
&\liminf_{t\to 0^+}\frac{f(\omega_0+tv)-f(\omega_0)}{t} \\
=&\lim_{j\to\infty}\frac{\mathbf{F}(\omega_0+t_jv,\mathbf{Z}_j)-\mathbf{F}(\omega_0,\mathbf{Z}_\star)}{t_j}\\
=&\lim_{j\to\infty}\frac{\mathbf{F}(\omega_0+t_jv,\mathbf{Z}_j)-\mathbf{F}(\omega_0,\mathbf{Z}_j)}{t_j} +\lim_{j\to\infty}\frac{\mathbf{F}(\omega_0,\mathbf{Z}_j)   -\mathbf{F}(\omega_0,\mathbf{Z}_\star)}{t_j}\\
\ge &\lim_{j\to\infty}\frac{\mathbf{F}(\omega_0+t_jv,\mathbf{Z}_j)-\mathbf{F}(\omega_0,\mathbf{Z}_j)}{t_j}\\
\ge & L(\omega_0;v).
\end{align*}
Combining with \eqref{eq:limsup}, we showed that the limit exists and is equal to $L(\omega_0;v)$.
\end{proof}

Since the function $f(\omega)$ is a differentiable continuous function on a convex set, the first-order necessary condition of local minima is a standard result in optimization.
\begin{theorem}
If $\omega_\star$ is a local minimizer of $f(\omega)$ over $\Theta$, then for any feasible direction $v\in\mathbb{R}^{kn}$ at $\omega_\star$,
$$
Df(\omega_\star;v)\ge 0.
$$
\end{theorem}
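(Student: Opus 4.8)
The plan is to derive the inequality directly from the definition of local minimality together with the existence of the directional derivative established in the preceding lemma; no new machinery is needed. Recall that $v$ being a feasible direction at $\omega_\star$ means $\omega_\star + tv \in \Theta$ for all sufficiently small $t > 0$, and that $\omega_\star$ being a local minimizer means there is a neighborhood $U$ of $\omega_\star$ with $f(\omega_\star) \le f(\omega)$ for every $\omega \in U \cap \Theta$.

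First I would fix a feasible direction $v$ and choose $t_0 > 0$ small enough that, for all $0 < t < t_0$, the point $\omega_\star + tv$ lies both in $\Theta$ (by feasibility) and in the neighborhood $U$ (which is possible since $\omega_\star + tv \to \omega_\star$ as $t \to 0^+$). For such $t$, local minimality gives $f(\omega_\star + tv) \ge f(\omega_\star)$, and hence the difference quotient
$$
\frac{f(\omega_\star + tv) - f(\omega_\star)}{t} \ge 0
$$
is nonnegative for every $t \in (0, t_0)$.

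Then I would pass to the limit $t \to 0^+$. By the preceding lemma the directional derivative $Df(\omega_\star; v)$ exists precisely as this limit, and since the limit of a convergent family of nonnegative quantities is nonnegative, we conclude $Df(\omega_\star; v) \ge 0$, as claimed.

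The argument requires essentially nothing beyond what is already in place: existence of the directional derivative — which the previous lemma supplies using the relative compactness of $\mathbf{S}^k$ and the continuity of $\mathbf{F}(\omega, \mathbf{Z})$ — does all the heavy lifting. The only point that warrants care is securing simultaneous feasibility and membership in the minimality neighborhood for small $t$, and this follows immediately from convexity of $\Theta$ together with $\omega_\star + tv \to \omega_\star$. I therefore do not expect any genuine obstacle; the statement is simply the standard first-order necessary condition for a local minimum of the (concave) function $f$ over the convex polytope $\Theta$, specialized to the directional-derivative formulation justified in the lemma above.
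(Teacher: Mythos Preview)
Your argument is correct and is exactly the standard first-order necessary condition proof. The paper itself does not give a proof of this theorem at all: it simply remarks that ``the first-order necessary condition of local minima is a standard result in optimization'' and then states the theorem. Your proposal supplies precisely the routine details behind that remark, relying on the preceding lemma for the existence of $Df(\omega_\star;v)$.
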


\section{Numerical Experiments}
\label{sec:exp}

We experimentally evaluate and compare performance of the various persistence diagram representations in the $k$-means algorithm on simulated data as well as a benchmark shape dataset.  We remark that our experimental setup differs from those conducted in the previous study by \cite{maroulas_kmeans_2017}, which compares persistence-based clustering to other non-persistence-based machine learning clustering algorithms.  Here, we are interested in the performance of the $k$-means algorithm in the context of persistent homology. We note that the aim of our study is not to establish superiority of persistence-based $k$-means clustering over other existing statistical or machine learning clustering methods, but rather to understand the behavior of the $k$-means clustering algorithm in settings of persistent homology.  The practical benefit to such a study is that it can make clustering by $k$-means actually feasible in certain settings where it would otherwise not be: a notable example that we will study later on Section \ref{sec:exp_pt_cloud} is on data comprising sets of point clouds, where the natural Gromov--Hausdorff metric between point clouds is difficult to compute, and moreover, where the Fr\'{e}chet mean for sets of point clouds---crucial for the implementation of $k$-means---has not yet been defined or studied, to the best of our knowledge. 

Specifically, in our numerical experiments, we implement the $k$-means clustering algorithms on vectorized persistence diagrams, persistence diagrams, and persistence measures. In the spirit of machine learning, many embedding methods have been proposed for persistence diagrams so that existing statistical methods and machine learning algorithms can then be applied to these vectorized representations, as discussed earlier.  We systematically compare the performance of $k$-means on three of the most popular embeddings.  We also implement $k$-means on persistence diagrams themselves as well as persistence measures, which requires adapting the algorithm to comprise the respective metrics as well as the appropriate means.  In total, we study a total of five representations for persistent homology in the $k$-means algorithm: three are embeddings of persistence diagrams, persistence diagrams themselves, and finally, their generalization to persistence measures.

\subsection{Embedding Persistence Diagrams}

The algebraic topological construction of persistence diagrams as outlined in Section \ref{sec:ph} results in a highly nonlinear data structure where existing statistical and machine learning methods cannot be applied.  The problem of embedding persistence diagrams has been extensively researched in TDA, with many proposals for vectorizations of persistence diagrams.  In this paper, we study three of the most commonly-used persistence diagram embeddings and their performance in the $k$-means clustering algorithm: Betti curves; persistence landscapes \citep{bubenik_statistical_2015}; and persistence images \citep{adams_persistence_2017}. These vectorized persistence diagrams can then be directly applied to the original, non-persistence-based $k$-means algorithm. In this sense, we are providing a comparison of the implementation of the persistent homology version of $k$-means clustering to the implementation of classical $k$-means clustering on vectors computed from the output of persistent homology (persistence diagrams).  This experiment thus remains restricted to studying exclusively the $k$-means algorithm as a clustering method, as well as restricted to persistent homology, but nevertheless provides a comparison of the algorithm in two distinct metric spaces.

The \emph{Betti curve} of a persistence diagram $D$ is a function that takes any $z \in \mathbb{R}$ and returns the number of points $(x, y)$ in $D$, counted with multiplicity, such that $z \in [x, y)$. Figure \ref{fig:pd} shows an example of a persistence diagram and its corresponding Betti curve is shown in Figure \ref{fig:bc}. It is a simplification of the persistence diagram, where information on the persistence of the points is lost. The Betti curve takes the form of a vector by sampling values from the function at regular intervals. 


\begin{figure}
  \centering
  \subfloat[Persistence Diagram\label{fig:pd}]{\includegraphics[width=.4\textwidth,height=.4\textwidth]{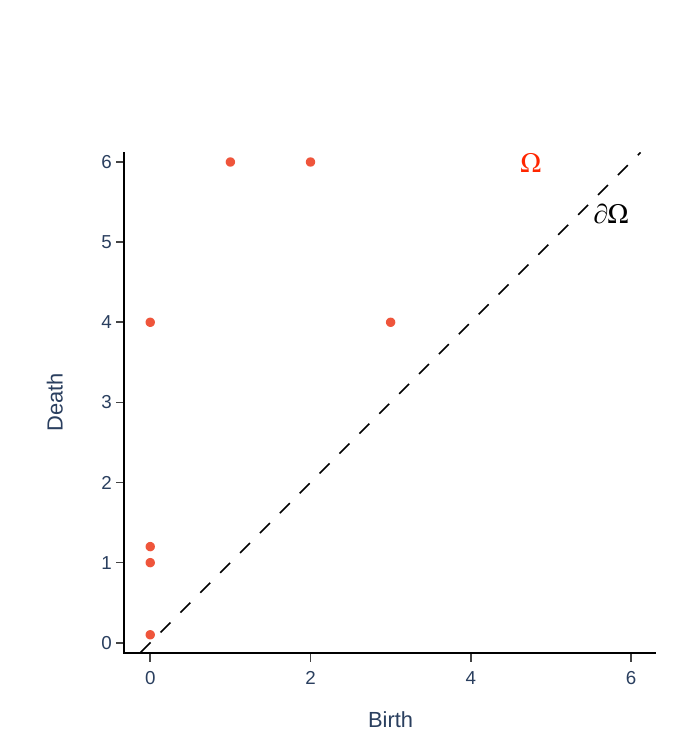}}
  \subfloat[Betti Curve\label{fig:bc}]{\includegraphics[width=.4\textwidth,height=.35\textwidth]{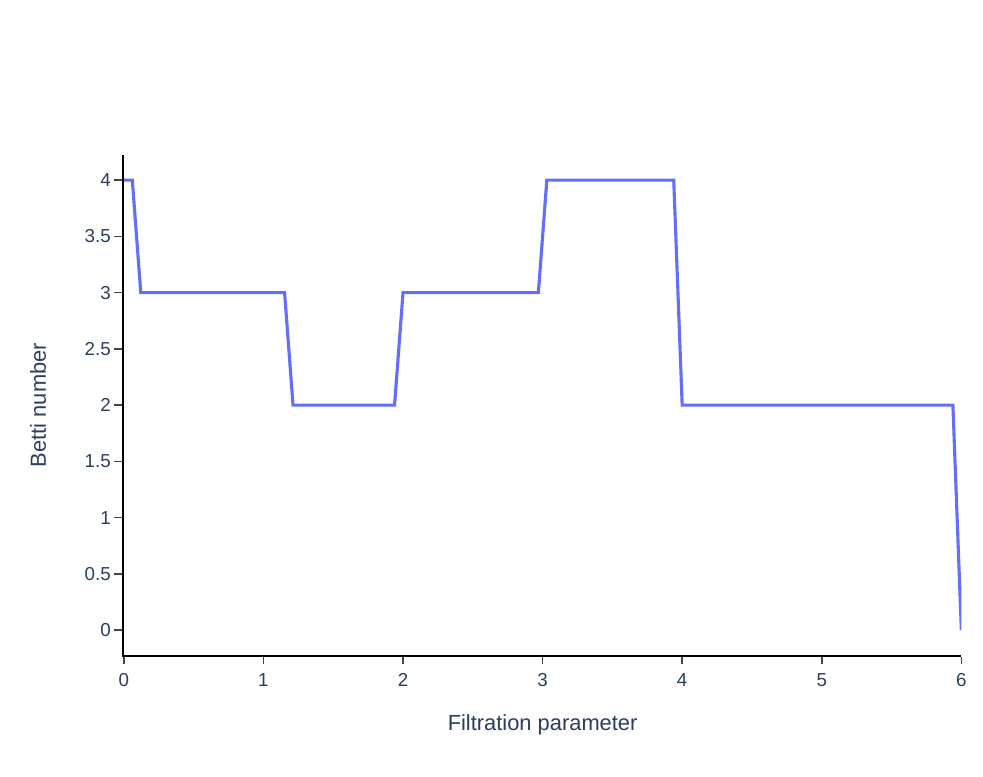}}\\
  \subfloat[Persistence Landscape\label{fig:pl}]{\includegraphics[width=.375\textwidth,height=0.35\textwidth]{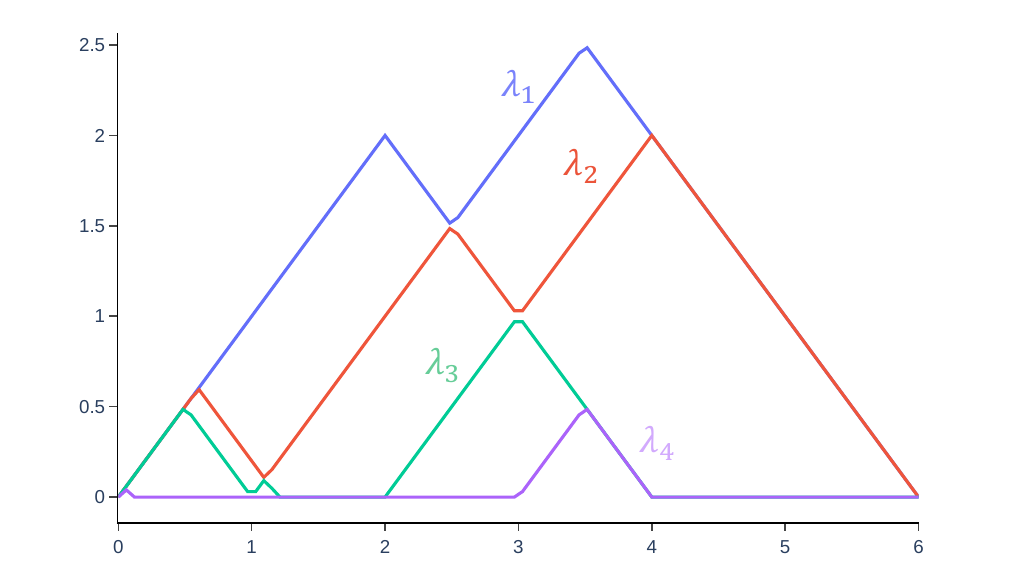}} 
  \subfloat[Persistence Image\label{fig:pi}]{\includegraphics[width=.375\textwidth,height=0.35\textwidth]{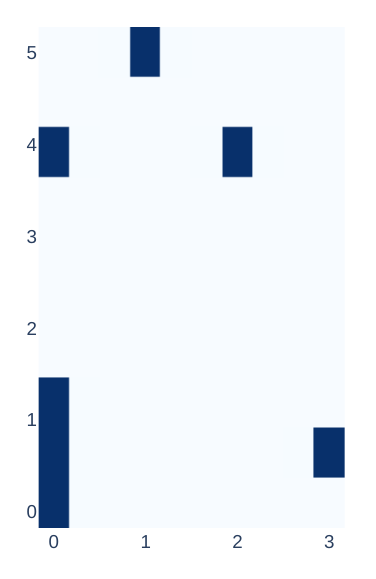}}
  \caption{Representations of Persistent Homology. \label{fig:pd_embed}}
\end{figure}

Persistence landscapes \citep{bubenik_statistical_2015} as functional summaries of persistence diagrams were introduced with the aim of integrability into statistical methodology.  The \emph{persistence landscape} is a set of functions $\{\lambda_k\}_{k \in \mathbb{N}}$ where $\lambda_k(t)$ is the $k$th largest value of 
$$
\max\bigg(0, \min_{(x, y) \in D}(t - x,\, y-t)\bigg).
$$
Figure \ref{fig:pl} continues the example of the persistence diagram in Figure \ref{fig:pd} and shows its persistence landscape. The construction of the persistence landscape vector is identical to that of the Betti curve, but for multiple $\lambda$ functions for one persistence diagram. 

The \emph{persistence image} \citep{adams_persistence_2017} is 2-dimensional representation of a persistence diagram as a collection of pixel values, shown in Figure \ref{fig:pi}. The persistence image is obtained from a persistence diagram by discretizing its \textit{persistence surface}, which is a weighted sum of probability distributions.  The original construction in \cite{adams_persistence_2017} uses Gaussians, which are also used in this paper. The pixel values are concatenated row by row to form a single vector representing one persistence diagram. 

The persistence image is the only representation we study that lies in Euclidean space; Betti curves and persistence landscapes are functions and therefore lie in function space.

We remark that there is a notable computational time difference between raw persistence diagrams and embedded persistence diagrams. This is because when updating centroids of persistence diagrams (i.e., mean persistence measures and Fr\'{e}chet means), expensive algorithms and techniques are involved to approximate the Wasserstein distance and search for local minima of the Fr\'{e}chet function \eqref{eq:frechet_function} \citep{turner_frechet_2013,flamary2021pot,lacombe_large_2018}.


\subsection{Simulated Data}
\label{sec:exp_pt_cloud}

We generated datasets equipped with known labels, so any clustering output can be compared to these labels using the Adjusted Rand Index (ARI)---a performance metric where a score of 1 suggests perfect alignment of clustering output and true labels, and 0 suggests random clustering (\cite{hubert_comparing_1985}).

We simulated point clouds sampled from the surfaces of 3 classes of common topological shapes: the circle $S^1$, sphere $S^2$, and torus $T^2$. In terms of homology, any circle is characterized by one connected component and one loop, so we have $H_0(S^1) \cong \mathbb{Z}$ and $H_1(S^1) \cong \mathbb{Z}$.  Similarly, a sphere is characterized by one connected component and one void (bubble), so $H_0(S^2) \cong \mathbb{Z}$ and $H_2(S^2) \cong \mathbb{Z}$, but there are no loops to a sphere since every cycle traced on the surface of a sphere can be continuously deformed to a single point, so $H_1(S^2) \cong 0$.  Finally, for the case of the torus, we have one connected component so $H_0(T^2) \cong \mathbb{Z}$ and one void so $H_2(T^2) \cong \mathbb{Z}$; notice that there are two cycles on the torus that cannot be continuously deformed to a point, one that traces the surface to enclose the ``donut hole'' and the other smaller one that encircles the ``thickness of the donut'' so $H_1(T^2) \cong \mathbb{Z} \times \mathbb{Z}$.

We add noise from the uniform distribution on $[-s,s]$ coordinate-wise to each data point of the point clouds, where $s$ stands for the noise scale. Figure \ref{fig:noisy_torus} shows noisy point cloud data from the torus with different noise scales.

\begin{figure}[h]
\begin{subfigure}{0.32\textwidth}
    \centering
    \includegraphics[width=\textwidth]{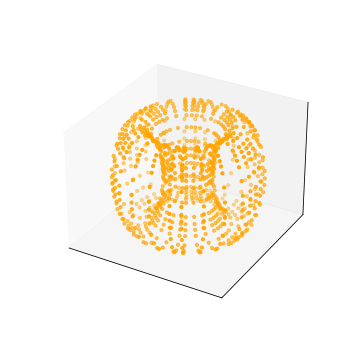}
\end{subfigure}
\begin{subfigure}{0.32\textwidth}
    \centering
    \includegraphics[width=\textwidth]{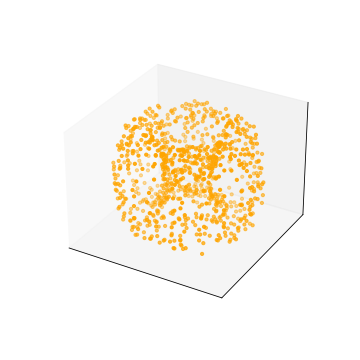}
\end{subfigure} 
\begin{subfigure}{0.32\textwidth}
    \centering
    \includegraphics[width=\textwidth]{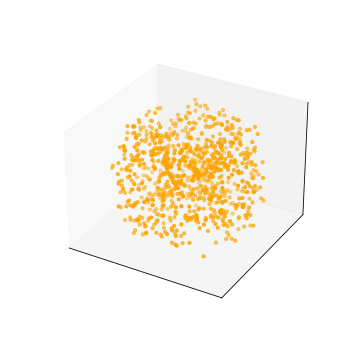}
\end{subfigure}
\caption{Torus point clouds with different level of noise $s=0,1,2$.}
\label{fig:noisy_torus}
\end{figure}
 
Table \ref{tab:synthetic_vector_results} shows the results of $k$-means clustering with 3 clusters on the five persistent homology representations of the simulated dataset. We fix the number of clusters to 3, given our prior knowledge of the data and because our focus is to consistently compare the performance accuracy of the $k$-means algorithms. Since circles, spheres, and tori are quite different in topology as described previously, the results from $k$-means algorithm is consistent with our knowledge of the topology of these classes of topological shapes. Figure \ref{fig:pd-synthetic} presents the persistence diagrams of the three different shapes we study. 

\begin{figure}[htbp]
\begin{subfigure}{0.32\textwidth}
    \centering
    \includegraphics[width=\textwidth]{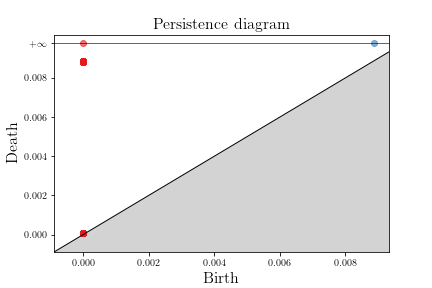}
\end{subfigure}
\begin{subfigure}{0.32\textwidth}
    \centering
    \includegraphics[width=\textwidth]{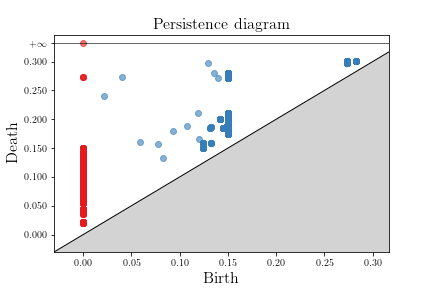}
\end{subfigure}
\begin{subfigure}{0.32\textwidth}
    \centering
    \includegraphics[width=\textwidth]{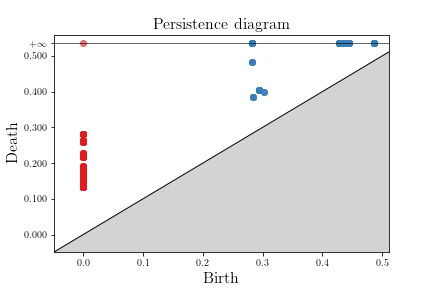}
\end{subfigure}
\caption{Sample persistence diagrams of a simulated circle, sphere, and torus.}
\label{fig:pd-synthetic}
\end{figure}

Tables \ref{tab:synthetic_vector_comparison} and \ref{tab:synthetic_repeat_persistence_results} show the average results from 100 repetitions comparing the persistence landscapes and persistence images, and the persistence diagrams and persistence measures, respectively. We see that persistence measures outperform persistence diagrams with consistently higher scores on average. Among the embedded persistence diagrams, the persistence landscapes produced the best clustering results, followed by the persistence images, and finally the Betti curves, which were not able to cluster accurately for any level of noise.  

\begin{table}[h]
\centering
\begin{subtable}[h]{0.65\textwidth}
\centering
\begin{tabularx}{\textwidth}{X|XXXXX}
  \hline
    Noise & PL & PI & BC & PD & PM \\
  \hline
    0.0 & \textbf{1.0} & \textbf{1.0} & \textbf{1.0} & \textbf{1.0} & \textbf{1.0} \\
    1.0 & \textbf{1.0} & 0.898 & 0.052 & \textbf{1.0} & \textbf{1.0} \\
    2.0 & 0.898 & \textbf{1.0} & 0.043 & \textbf{1.0} & \textbf{1.0}\\
    3.0 & 0.898 & 0.731 & 0.202 & \textbf{1.0} & \textbf{1.0}\\
    4.0 & \textbf{1.0} & \textbf{1.0} & 0.416 & \textbf{1.0} & \textbf{1.0}\\
    5.0 & 0.717 & 0.898 & 0.266 & \textbf{1.0} & \textbf{1.0}\\
    10.0 & 0.667 & 0.423 & 0.225 & \textbf{1.0} & 0.898\\
   \hline
\end{tabularx}
    \caption{Clustering results for simulated data}
    \label{tab:synthetic_vector_results}
    \end{subtable}

\begin{subtable}[h]{0.45\textwidth}
\centering
\begin{tabularx}{\textwidth}{X|XXX}
  \hline
    Noise & Mean PL score & Mean PI score\\
  \hline
    1.0 & \textbf{0.998} (0.012) & 0.892 (0.063) \\
    2.0 & \textbf{0.974} (0.037) & 0.965 (0.036) \\
    3.0 & 0.911 (0.082) & \textbf{0.922} (0.105) \\
    4.0 & \textbf{0.890} (0.085) & 0.852 (0.161) \\
    5.0 & \textbf{0.843} (0.112) & 0.734 (0.186) \\
    10.0 & \textbf{0.633} (0.088) & 0.453 (0.014) \\
   \hline
\end{tabularx}
    \caption{Performance of PLs vs PIs in 100 repetitions of the simulated data}
    \label{tab:synthetic_vector_comparison}
    \end{subtable}\hfill
\begin{subtable}[h]{0.45\textwidth}
\centering
\begin{tabularx}{\textwidth}{X|XX}
  \hline
    Noise & Mean PD score & Mean PM score \\
  \hline
    1.0 & 0.971 (0.202) & \textbf{0.996} (0.020) \\
    2.0 & 0.952 (0.260) & \textbf{0.961} (0.069) \\
    3.0 & 0.892 (0.251) & \textbf{0.947} (0.064) \\
    4.0 & 0.904 (0.218) & \textbf{0.946} (0.074) \\
    5.0 & 0.890 (0.220) & \textbf{0.945} (0.070) \\
    10.0 & 0.854 (0.185) & \textbf{0.892} (0.077) \\
   \hline
\end{tabularx}
    \caption{Clustering results for simulated data using PDs and PMs}
    \label{tab:synthetic_repeat_persistence_results}
    \end{subtable}
\end{table}

\subsection{3D Shape Matching Data}

We now demonstrate our framework on a benchmark dataset from 3D shape matching \citep{sumner2004deformation}. This dataset contains 8 classes: \texttt{Horse}, \texttt{Camel}, \texttt{Cat}, \texttt{Lion}, \texttt{Face}, \texttt{Head}, \texttt{Flamingo}, \texttt{Elephant}. Each class contains triangle meshes representing different poses. See Figure \ref{fig:shapes} for an visual illustration.

\begin{figure}[h]
    \centering
    \includegraphics[width=\linewidth]{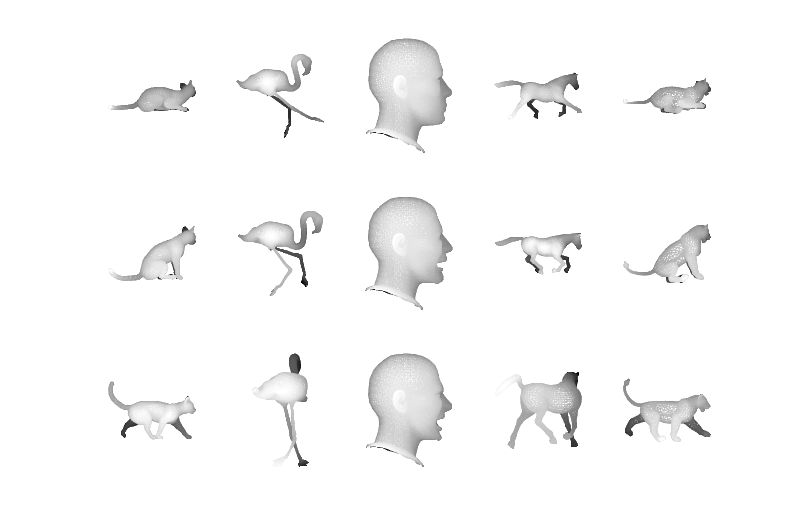}
    \caption{Samples from the shape matching database. Each column shows different poses of the same class.}
    \label{fig:shapes}
\end{figure}

For each triangle mesh we extract the coordinates of its vertices to obtain a point cloud. Then we compute the persistent homology of the point clouds. We apply the $k$-means clustering algorithm to raw/embedded persistence diagrams. From Table \ref{tab:shapes}, we see that the best ARI score appears when we set $k=3$. Moreover, we find that most point clouds in \texttt{Face} and \texttt{Head} form one cluster, \texttt{Horse} and \texttt{Camel} form the second, and \texttt{Cat} and \texttt{Lion} form the third, with other classes scattering in all these clusters. Our results are consistent with a similar clustering experiment previously performed by \cite{lacombe_large_2018}, who find two main clusters on a smaller database with 6 classes. We also find that persistence diagrams perform slightly better than persistence measures on this dataset. Among embedded persistence diagrams, the persistence landscapes give better clustering results, followed by persistence images and Betti curves, which is consistent with the performance seen in the previous section on the simulated datasets.    

\begin{table}[h]
    \centering
    \begin{tabularx}{\textwidth}{|X|XXXXX|}
    \hline
       Clusters  & PL & PI & BC & PD & PM \\
    \hline
       3  & 0.802 & 0.461 & 0.211 & \textbf{0.820} & 0.802\\
       4 & \textbf{0.471} & 0.398 & 0.158 & \textbf{0.471} & 0.465\\
       5 & 0.258 & \textbf{0.313} & 0.196 & 0.259 & 0.258\\
    \hline
    \end{tabularx}
    \caption{Clustering results for shape matching data.}
    \label{tab:shapes}
\end{table}

\subsection*{Software and Data Availability}
The code used to perform all experiments is publicly available at the GitHub repository: \url{https://github.com/pruskileung/PH-kmeans}.


\section{Discussion}
\label{sec:end}

In this paper, we studied the $k$-means clustering algorithm in the context of persistent homology.  We studied the subtleties of the convergence of the $k$-means algorithm in persistence diagram space and in the KKT framework, which is a nontrivial problem given the highly nonlinear geometry of persistence diagram space.  Specifically, we showed that the solution to the optimization problem is a partial optimal point, KKT point, as well as a local minimum.  These results refine, generalize, and extend the existing study by \cite{maroulas_kmeans_2017}, which shows convergence to a partial optimal point, which need not be a local minimum.  Experimentally, we studied and compared the performance of the algorithm for inputs of three embedded persistence diagrams and modified the algorithm for inputs of persistence diagrams themselves as well as their generalizations to persistence measures.  We found that empirically, clustering results on persistence diagrams and persistence measures directly were better than on vectorized persistence diagrams, suggesting a loss of structural information in the most popular persistence diagram embeddings.  

Our results inspire new directions for future studies, such as other theoretical aspects of $k$-means clustering in the persistence diagram and persistence measure setting, for example, convergence to the ``correct'' clustering as the input persistence diagrams and persistence measures grow; and other properties of the algorithm such as convexity, other local or global optimizers, and analysis of the cost function.  The problem of information preservation in persistence embedding has been previously studied where statistically, an embedding based on tropical geometry results in no loss of statistical information via the concept of sufficiency \citep{doi:10.1137/17M1148037}.  Additional studies from the perspective of entropy would facilitate a better understanding and quantification of the information lost in embedding persistence diagrams that negatively affect the $k$-means algorithm.  This in turn would inspire more accurate persistence diagram embeddings for machine learning.  Another possible future direction for research is to adapt other clustering methods, including deep learning-based methods, to the setting of persistent homology. This would then allow for a comprehensive study of a wide variety of statistical and machine learning approaches for clustering in the context of persistent homology.

\section*{Acknowledgments}

We wish to thank Antonio Rieser for helpful discussions.  We also wish to acknowledge the Information and Communication Technologies resources at Imperial College London for their computing resources which were used to implement the experiments in this paper.


\bibliographystyle{authordate3}
\bibliography{kmeans}

\end{document}